\documentclass[aps,prx,superscriptaddress,nofootinbib,twocolumn]{revtex4-2}

\usepackage[utf8]{inputenc}
\usepackage{amsmath,amssymb,amsthm, mathtools, physics}
\usepackage{enumitem}
\usepackage{dsfont}
\usepackage{braket}
\usepackage{bm}
\usepackage{graphicx, color, xcolor}
\usepackage{comment}
\usepackage[caption=false]{subfig}
\usepackage{hyperref}
\usepackage{float}
\hypersetup{
    colorlinks=true,
    citecolor=blue,
    linkcolor=magenta,
    urlcolor=magenta,
}

\newtheorem{lemma}{Lemma}
\newtheorem{theorem}{Theorem}
\newtheorem{definition}{Definition}

\makeatletter
\newsavebox{\@brx}
\newcommand{\llangle}[1][]{\savebox{\@brx}{$\m@th{#1\langle}$}%
  \mathopen{\copy\@brx\kern-0.5\wd\@brx\usebox{\@brx}}}
\newcommand{\rrangle}[1][]{\savebox{\@brx}{$\m@th{#1\rangle}$}%
  \mathclose{\copy\@brx\kern-0.5\wd\@brx\usebox{\@brx}}}
\makeatother

\begin{document}

\title{Efficient equivalence checking of Clifford-U circuits with shared single-qubit unitaries}

\author{Daisuke Sakamoto}
\email{u546555h@ecs.osaka-u.ac.jp}
\affiliation{School of Engineering Science, The University of Osaka, 1-3 Machikaneyama, Toyonaka, Osaka 560-8531, Japan}

\author{Soshun Naito}
\email{naito@biom.t.u-tokyo.ac.jp}
\affiliation{Department of Information and Communication Engineering, Graduate School of Information Science and Technology, The University of Tokyo, 7-3-1 Hongo, Bunkyo-ku, Tokyo 113-0033, Japan}

\author{Yusei Mori}
\affiliation{Graduate School of Engineering Science, The University of Osaka, 1-3 Machikaneyama, Toyonaka, Osaka 560-8531, Japan}

\author{Kosuke Mitarai}
\email{mitarai.kosuke.es@osaka-u.ac.jp}
\affiliation{Graduate School of Engineering Science, The University of Osaka, 1-3 Machikaneyama, Toyonaka, Osaka 560-8531, Japan}
\affiliation{Center for Quantum Information and Quantum Biology, The University of Osaka, 1-2 Machikaneyama, Toyonaka 560-0043, Japan}

\date{\today}

\begin{abstract}
Quantum circuit equivalence checking asks whether two circuits implement the same unitary. 
It guarantees compiler correctness and safe optimization, yet most existing approaches scale exponentially with the number of qubits or the circuit depth, or are restricted to specific circuit structures. 
In this work, we present an equivalence-checking method for circuits formed by arbitrary single-qubit layers interleaved with Clifford layers. 
This pattern is common in variational quantum algorithms and Hamiltonian simulation via Trotter decomposition.
It can also represent any unitary with sufficient depth. 
We prove the existence of an efficient classical algorithm that determines whether a pair of circuits with shared single-qubit layers are equivalent for every possible choice of the shared single-qubit unitaries.
The same algorithm can also certify their non-equivalence for fixed assignments of single-qubit unitaries.
Our framework supports the validation of emerging quantum compilers and facilitate the discovery of novel circuit optimization passes.
\end{abstract}

\maketitle

\section{Introduction}

\begin{table}[t]
    \centering
    \caption{Comparison of major circuit equivalence–checking methods and their worst-case computational costs to verify Eq. \eqref{eq:intro} with respect to the number of qubits $n$, gates $g$, and circuit depth $d$.}
    \begin{tabular}{c c c c}
        \hline\hline
        Method & complexity(worst-case)  \\ \hline
        Direct matrix computation & $4^n\cdot 2^{O(g)}$ \\
        PBEC~\cite{Yu2025Scalable} & $2^{\text{poly}(d)} \cdot O(n)$  \\
        ZX-calculus based~\cite{Peham2023EquivalenceCheckingParameterized} & $2^{O(n)} \cdot \text{poly}(g)$  \\
        This work & $O(n \cdot g)$  \\
        \hline\hline
    \end{tabular}
    \label{tab:complex}
\end{table}

Quantum computers are widely expected to provide computational advantages over their classical computers for certain classes of problems. 
However, extracting meaningful results from long-term quantum algorithms such as Shor's algorithm \cite{Shor1994} for integer factorization or Grover's search \cite{Grover1996} requires large-scale, fault-tolerant devices with deep circuits and robust error-correction schemes, far beyond the capabilities of hardware that is available today.

In contrast, quantum processors typically support only a limited gate library and exhibit limited qubit connectivity, meaning that algorithms containing high-level gates cannot be executed directly on actual devices. As a result, any practical implementation of a quantum algorithm necessarily relies on a compilation pipeline that synthesizes the circuit into the hardware-supported gate library, maps logical qubits to physical qubits, and applies optimizations to reduce the overall circuit size.

A key component for developing such optimization methods is equivalence checking, which is a task to determine whether two quantum circuits implement the same unitary operation. 
In full generality, exact equivalence checking is computationally intractable: it is NQP-hard \cite{NQP-hard}, and even approximate variants are QMA-hard \cite{QMA-hard}, including settings restricted to constant-depth circuits \cite{QMA-hard-ShortDepth}.
These hardness results motivate the development of practical frameworks that exploit additional circuit structure or employ efficient representations.

A useful way to view the landscape of practical equivalence checking is through the lens of classical simulation and classical representations of quantum circuits.
Many equivalence-checking frameworks are built around a simulator-like backbone that provides a compact representation and efficient manipulation rules for a target circuit family.
One example is equivalence checking frameworks for Clifford circuits \cite{Thanos2023FastEquivalenceClifford, Osama2025ParallelECGPU}, which leverages their classical simulability \cite{Aaronson2004}.
Two major approaches for this task is to utilize the ZX-calculus \cite{ZX-calculus} or the decision-diagram \cite{Miller2006QMDD} based representations of quantum circuits \cite{Peham2022EquivalenceParadigms,Peham2022EquivalenceCheckingZX, Peham2023EquivalenceCheckingParameterized, Peham2022EquivalenceParadigms, wang2008XQDDBasedVerification, Yamashita2008DDMFEfficientDecision, Wei2022AccurateBDDbasedUnitary, hong2022tdd, Wille2009EquivalenceReversible, Miller2006QMDD, Hong2024PQCEquivalence, Hong2022EquivalenceDQC, Choudhury2025ZXNet}.
Equivalence checking can also be supported by path-integral style simulators \cite{Amy2018LargeScaleVerification} or tensor-network based simulators \cite{Sander2025MPOEquivalence}.
Constraint-solving based approaches encode circuit semantics and equivalence as SAT/SMT constraints and use state-of-the-art SAT/SMT solvers \cite{Mei2024EquivalenceModelCounting, Berent2022SATQuantum, Chen2025AutomataQuantumVerification, Chen2025AutoQ2, BauerMarquart2023symQV, Xu2022Quartz, Shi2019CertiQ,Lewis2023FormalVerificationQuantum, Chareton2022FormalMethodsQuantumSurvey}.
Across these approaches, the core trade-off is similar: a representation is powerful on instances with specific structures, but the worst-case cost typically remains exponential for general circuits with respect to the number of qubits or gates.

In this work, we propose a method of equivalence-checking for a widely used and expressive class of quantum circuits in which layers of $n$-qubit Clifford gates $C$ alternate with arbitrary single-qubit unitaries $U$, i.e., 
\[
F(\vec{U}) = C_m U_m^{(k_m)} C_{m-1} U_{m-1}^{(k_{m-1})} \cdots C_1 U_1^{(k_1)} C_0.
\]
$F'(\vec{U})$, where each $U_i$ acts on a $k_i$-th single qubit and the Clifford layers ${C_i}$ may differ between the two circuits.
These ``Clifford-U'' circuits are known to possess universal expressiveness for quantum computation, and they are used as \emph{ansatz} for variational quantum algorithms (VQAs) \cite{cerezo2021vqa} and other parametric circuit architectures. 
Furthermore, this architecture naturally arises when implementing time evolution via the Trotter-Suzuki decomposition \cite{lloyd1996universal}, where the evolution time and Hamiltonian coefficients are mapped onto the continuous rotation angles of the single-qubit layers.

Instead of asking whether two fixed circuits $F$ and $F'$ are equivalent, we reformulate the problem as determining whether
\begin{align}
F(\vec{U}) = F'(\vec{U}) \quad \forall \, \vec{U},
\label{eq:intro}
\end{align}
i.e., whether the two \emph{ansatz} implement the same transformation for every possible assignment of the single-qubit unitaries $U_i$.
This reframing simplifies the computational complexity: by reducing the problem to the propagation of substituted Pauli operators through the Clifford layers, we show that the verification can be performed in polynomial time in both the number of qubits and the number of layers.
Table \ref{tab:complex} compares the complexity of the proposed method with that of existing algorithms.

The key idea is to substitute Pauli operators in place of the arbitrary unitaries $U_i$
and track their adjoint action through the circuit, which is efficient on classical computers by Gottesman-Knill theorem \cite{Aaronson2004}.
This substitution allows us to derive algebraic conditions that are necessary and sufficient for the two circuits to be equivalent for all possible choices of the unitaries $\vec{U}$; consequently, these conditions yield an efficient and complete procedure for equivalence checking within this class of circuits.

Moreover, although our method is framed in terms of equivalence for all possible $\vec{U}$, it also provides a practical and computationally lightweight approximate indicator of equivalence for the original fixed circuits $F(\vec{U})$ and $F'(\vec{U})$.
Thus, the approach simultaneously offers a rigorous polynomial-time solution for parametric equivalence checking and a fast heuristic for ordinary circuit equivalence.

The remainder of this paper is organized as follows. We first formalize the
structural assumptions under which our method applies and state the precise
conditions for equivalence. We then present the theoretical proof of
correctness and describe the resulting polynomial-time verification
procedure in detail. Next, we compare our approach with existing methods and
discuss its conceptual and practical advantages. Finally, we conclude by
summarizing our contributions and outlining potential extensions.

\section{Problem Setting and Main Result}
\label{sec:Problem-Result}

\begin{figure*}
  \centering
  \includegraphics[height=7em]{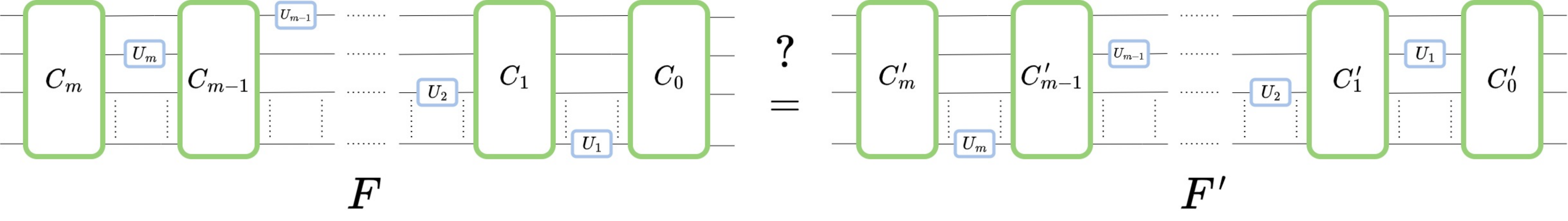}
  \caption{Circuit diagrams of the quantum circuits $F$ and $F'$ targeted for equivalence checking in this paper.}
  \label{fig:F,Fprime}
\end{figure*}

In this section, we formally specify the class of quantum circuits considered in this work and present our main equivalence-checking result.
We first introduce the definitions required to state our main result, and then present the central theorem, which gives a complete and efficient
criterion for equivalence-checking.

Throughout this paper, we use $\mathcal{U}_d$ for a set of $d \times d$ unitary matrix, $\mathcal{P}=\{I, X, Y, Z\}$ for a set of Pauli operator.
Also, for a single-qubit unitary $U\in \mathcal{U}_2$, the notation $U^{(k)}$ represents  $U$ acting on the $k$-th qubit of the $n$-qubit system. 
Explicitly, it is defined as the tensor product
\begin{equation}
    U^{(k)} := I^{\otimes (k-1)} \otimes U \otimes I^{\otimes (n-k)},
\end{equation}
where $I$ denotes the $2 \times 2$ identity matrix.

\begin{definition}
\label{def:F,Fprime}
Let $F,F' : \mathcal{U}^m_d \to \mathcal{U}_{2^n}$ be functions that take as input a sequence of single-qubit unitaries
$\vec{U} = [U_1, \ldots, U_m]$ and output the corresponding $n$-qubit quantum circuit (unitary operator) defined as
\begin{align}
  F(\vec{U}) &= C_m U_m^{(k_m)} C_{m-1} U_{m-1}^{(k_{m-1})} \cdots C_1 U_1^{(k_1)} C_0,
  \label{eq:def-circuit_1}  \\
  F'(\vec{U}) &= C'_m U_m^{(l_m)} C'_{m-1} U_{m-1}^{(l_{m-1})} \cdots C'_1 U_{1}^{(l_1)} C'_0,
  \label{eq:def-circuit_2}
\end{align}
where $C_i$ and $C'_i$ are $n$-qubit Clifford gates.
\end{definition}
Note that, in the above definition, the pair of sequences $(\{C_i\}_i, \{k_i\}_i)$ and $(\{C'_i\}_i, \{l_i\}_i)$ are fixed parameters characterizing $F$ and $F'$, respectively.
Figure~\ref{fig:F,Fprime} shows the circuit diagrams for the quantum circuits $F$ and $F'$ defined in Definition~\ref{def:F,Fprime}.

Our goal is to determine whether the two circuit \emph{functions} $F$ and $F'$ implement the same unitary transformation for \emph{all} possible assignments of the single-qubit unitaries $\vec{U}$, up to a global phase.
To analyze this problem, we study how Pauli operators propagate through the Clifford layers of each circuit.
\begin{definition}
\label{def:V}
For $C_i$ and $C_i'$ in Definition~\ref{def:F,Fprime}, 
let $C_{a:b} := C_a C_{a-1} \cdots C_b$ where $a \ge b$ and 
$C'_{a:b} := C'_a C'_{a-1} \cdots C'_b$. 
We define 
\begin{align}
V_i(P_i) := 
C_{m:i}\, P_i^{(k_i)}\, C_{m:i}^\dagger, 
\\
V'_i(P_i) := 
C'_{m:i} P_i^{(l_i)} C_{m:i}'^\dagger.
\label{eq:def-V}
\end{align}
Note that $V_i(P_i)$ can be written as $V_i(P_i) = (-1)^{r_i} P_1\otimes P_2\otimes \cdots \otimes P_n$ for some $r_i\in \{0,1\}$ and $P_i \in \mathcal{P}$.
Also, we denote the binary symplectic representation of $V_i(P_i)$  by the vector
\begin{align}
\bm{v}_i(P_i) :=
(z_{1i}, \ldots, z_{ni}, x_{1i}, \ldots, x_{ni} , r_i)^\top
\in \mathbb{F}_2^{2n+1}.
\label{eq:v-vector-def}
\end{align}
The pair $(z_{qi}, x_{qi}) \in \mathbb{F}_2^2$ is determined by $P^{(q)}$ according to the following correspondence:
\begin{align*}
P^{(q)} = I \Rightarrow (z_{qi}, x_{qi}) = (0,0), \\
P^{(q)} = X \Rightarrow (z_{qi}, x_{qi}) = (0,1), \\
P^{(q)} = Y \Rightarrow (z_{qi}, x_{qi}) = (1,1), \\
P^{(q)} = Z \Rightarrow (z_{qi}, x_{qi}) = (1,0).    
\end{align*}
\end{definition}

These operators describe how a Pauli substitution at a given layer propagates through the remainder of the Clifford circuit.
Note that each of $\bm{v}_i(P_i)$ can be computed on a classical computer in $\mathrm{poly}(nm)$ time using, for example, the algorithm proposed in Ref. \cite{Aaronson2004}.
Using these vectors, we collect the propagation data of all Pauli substitutions into a single matrix.
\begin{definition} [Decision matrix]
\label{def:B,Bprime}
For the circuit $F(\vec{U})$ defined as Definition \ref{def:F,Fprime}, 
let $S_X$ and $S_Z$ respectively be the horizontal concatenation of the $m$ vectors $\bm{v}_i(X)$ and $\bm{v}_i(Z)$ corresponding to
$V_i(X)$ for $V_i(Z)$ defined in Definition \ref{def:V}.
We define a $(2n{+}1) \times 2m$ matrix $B$ for the circuit $F$ as
\begin{align}
B :&=
\begin{bmatrix}
S_Z \mid S_X 
\end{bmatrix} \notag \\ 
&=
\begin{bmatrix}
z_{11} & \cdots & z_{1m} & z_{1(m+1)} & \cdots & z_{1(2m)} \\
\vdots & \ddots & \vdots & \vdots & \ddots & \vdots \\
z_{n1} & \cdots & z_{nm} & z_{n(m+1)} & \cdots & z_{n(2m)} \\
x_{11} & \cdots & x_{1m} & x_{1(m+1)} & \cdots & x_{1(2m)} \\
\vdots & \ddots & \vdots & \vdots & \ddots & \vdots \\
x_{n1} & \cdots & x_{nm} & x_{n(m+1)} & \cdots & x_{n(2m)} \\
r_{1} & \cdots & r_{m} & r_{m+1} & \cdots & r_{2m} 
\end{bmatrix}.
\label{eq:decision-matrix}
\end{align}
We refer to this matrix $B$ as the \emph{decision matrix}.
A decision matrix $B'$ is analogously defined for $V'_i(P_i)$.
\end{definition}
We are now ready to state our main result, which shows that equivalence for all parameter assignments can be decided by comparing two efficiently computable objects.
\begin{theorem}
\label{thm:equiv-theory}
Under definition~\ref{def:F,Fprime},~\ref{def:V}, and~\ref{def:B,Bprime}, $F(\vec{U})$ and $F'(\vec{U})$ are equivalent for any $\vec{U}$, \emph{if and only if} these two equations
\begin{align}
B' &= B,
\label{eq:equation-BB} \\
  C'_{m:0} &= e^{i\phi} C_{m:0},
\label{eq:CC}
\end{align}
simultaneously hold.
\end{theorem}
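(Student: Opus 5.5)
The plan rests on a telescoping rewrite of the circuit. For each $i$ we define the conjugated single-qubit gate
\[
W_i(U) := C_{m:i}\, U^{(k_i)}\, C_{m:i}^\dagger, \qquad W'_i(U) := C'_{m:i}\, U^{(l_i)}\, C_{m:i}'^\dagger .
\]
Inserting $C_{m:i}^\dagger C_{m:i}=I$ between consecutive layers of Eq.~\eqref{eq:def-circuit_1} gives
\[
F(\vec U)=W_m(U_m)\,W_{m-1}(U_{m-1})\cdots W_1(U_1)\,C_{m:0},
\]
and the analogous factorisation holds for $F'$. Each map $U\mapsto W_i(U)$ is a unital $*$-algebra homomorphism from $2\times 2$ matrices into $2^n\times 2^n$ matrices, and it does not depend on the global phase of $C_{m:i}$. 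It is therefore determined by its values on the generators $X$ and $Z$. Those values are exactly the signed Paulis $V_i(X)$ and $V_i(Z)$ of Definition~\ref{def:V}, which are encoded column-wise, including the sign bit $r$, in $B$.

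\emph{Sufficiency.} Assume $B'=B$. Then $V'_i(X)=V_i(X)$ and $V'_i(Z)=V_i(Z)$ exactly, signs included. Every $U\in\mathcal U_2$ is a linear combination of $I$, $X$, $Z$ and $XZ$. By the homomorphism property, $W'_i(U)=W_i(U)$ for every $i$ and every $U$. Combining this with Eq.~\eqref{eq:CC} in the factorised form gives $F'(\vec U)=e^{i\phi}F(\vec U)$ for all $\vec U$.

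\emph{Necessity.} Assume $F'(\vec U)=e^{i\theta(\vec U)}F(\vec U)$ for all $\vec U$.

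1. Setting every $U_j=I$ gives $C'_{m:0}=e^{i\phi}C_{m:0}$, which is Eq.~\eqref{eq:CC}.
2. Cancelling this Clifford, and then setting all $U_j=I$ except $U_i$, gives $W'_i(U)\propto W_i(U)$ for every single-qubit $U$.
3. Apply step 2 to $U=e^{-i\theta X}=\cos\theta\, I-i\sin\theta\, X$. Write $V=V_i(X)$ and $V'=V'_i(X)$; both are Hermitian, non-identity signed Paulis. This yields
\[
\cos\theta\, I - i\sin\theta\, V' \;=\; c(\theta)\bigl(\cos\theta\, I - i\sin\theta\, V\bigr)
\]
for some scalar $c(\theta)$.
4. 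Distinct Pauli strings are linearly independent. Hence if $V'\neq\pm V$, comparing coefficients at $\theta=\pi/4$ forces the $V'$ coefficient to vanish, which is a contradiction.
5. If $V'=-V$, restrict to the $\pm1$ eigenspaces of $V$. At $\theta=\pi/4$ the required ratio $c$ would be $e^{-2i\theta}=-i$ on one eigenspace and $e^{2i\theta}=i$ on the other. A single scalar cannot equal both, so this case is excluded.
6. Therefore $V'_i(X)=V_i(X)$ including sign, and the same argument with $Z$ gives $V'_i(Z)=V_i(Z)$. This holds for every $i$, so $B'=B$.

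The main obstacle is the global-phase ambiguity. Equivalence "up to phase" might seem unable to detect the sign bit $r_i$ recorded in $B$. The continuous family $e^{-i\theta P}$ resolves this: a sign flip $V\mapsto -V$ turns the rotation into its inverse, and a rotation and its inverse are not proportional for generic $\theta$. The cancellation in step 2 is also safe, because the phase from step 1 is a scalar and commutes with everything.
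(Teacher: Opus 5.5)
Your argument is correct, and it takes a somewhat different route from the paper. The paper first replaces the continuous parameters by the finite set $\mathcal{P}^m$ using the multilinear expansion $F(\vec U)=\sum_{\vec P}\epsilon_{\vec P}F(\vec P)$. It then writes $F(\vec P)=W(\vec P)\,C_{m:0}$ and extracts $C'_{m:0}=e^{i\phi}C_{m:0}$ from $\vec P=(I,\dots,I)$. Next it isolates each $V_i(P)$ with a single nontrivial Pauli entry, and finally passes to $B$ via $Y=iXZ$.

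Your sufficiency direction has the same content in a more compact form. Linearity and multiplicativity of $U\mapsto C_{m:i}U^{(k_i)}C_{m:i}^\dagger$ play the roles of the Pauli expansion and the $Y=iXZ$ step. The telescoped factorisation also lets you work with $\vec U$ directly rather than with Pauli substitutions.

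The real difference is in necessity. The paper assumes one constant phase $\phi$ valid for all $\vec U$ (its Lemma on Pauli reduction). Comparing the all-identity substitution with a single-Pauli substitution then gives $V'_i(P)=V_i(P)$, sign bit included, at no cost. You assume only $F'(\vec U)=e^{i\theta(\vec U)}F(\vec U)$. Under that weaker hypothesis, Pauli substitutions alone would give only $V'_i(P)=\pm V_i(P)$. For example, on one qubit take $F(U)=U$ and $F'(U)=ZUZ$. These have the same Clifford backbone and agree up to a phase on every Pauli. Yet $F'(e^{-i\theta X})=e^{i\theta X}$ is not proportional to $e^{-i\theta X}$ at $\theta=\pi/4$.

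Your rotation-family argument in steps 3--5 is exactly what detects this sign flip. So your proof establishes a stronger necessity statement than the paper's. Combined with your sufficiency direction, it shows that equivalence up to a $\vec U$-dependent phase automatically upgrades to equivalence with a single constant phase. The paper's route is shorter, but it only covers the uniform-phase reading of ``equivalent''.
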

The theorem reduces the equivalence-checking problem for a broad class of quantum circuits to a comparison of their Clifford backbones and their associated decision matrices.

We stress that the computation of $\bm{v}_i(P_i)$ and $B$, and thus the entire equivalence-checking process based on Theorem~\ref{thm:equiv-theory}, can be performed efficiently on a classical computer. 
Specifically, by utilizing the standard stabilizer tableau formalism \cite{Aaronson2004}, the computational complexity of the verification process is strictly bounded by $O(n \cdot g)$.
Here, $g$ denotes the total number of Clifford gates contained in the quantum circuit $F$, and $n$ is the number of qubits.
This guarantees that our verification framework scales polynomially, avoiding the exponential bottleneck typical of general quantum circuit simulation.

It is notable that Eq.~\eqref{eq:equation-BB} can also be used as an efficient test for equivalence of circuits instantiated with a \emph{fixed} choice of parameters, i.e.,
\begin{align}
    F(\vec{U}) = e^{i\phi}F'(\vec{U}), \qquad \exists \, \vec{U} \in \mathcal{U}(2).
\end{align}
In this setting, we deliberately allow for false negatives—cases in which equivalent circuits are reported as inequivalent—in exchange for a significant reduction in computational cost.
Crucially, however, the method admits \emph{no false positives}:
whenever the procedure reports two circuits as equivalent, they are guaranteed to implement the same unitary.
Thus, the present theory provides both a rigorous polynomial-time criterion for parametric equivalence and a sound, compiler-friendly heuristic for ordinary circuit equivalence.

\section{Proof Sketch}
\label{sec:Structure}

In this section, we provide a proof sketch of the main theorem and explain the logical structure underlying our equivalence criterion.
Complete proofs of all intermediate statements are deferred to the Appendix~\ref{app:non-rearrange}.

The first key observation is that the equivalence of parameterized circuits for all single-qubit unitaries can be reduced to their equivalence under Pauli substitutions.
By expanding each single-qubit unitary in the Pauli basis, we show that 
\begin{align}
    F'(\vec{U}) = e^{i\phi}F(\vec{U}), \qquad \forall \, \vec{U} \in \mathcal{U}^m,
\label{log:equiv-U}
\end{align}
holds for a global phase $\phi$ if and only if 
\begin{align}
    F'(\vec{P}) = e^{i\phi}F(\vec{P}), \qquad \forall \, \vec{P} \in \mathcal{P}^m.
\label{log:equiv-P}
\end{align}
Furthermore, for a specific instance of single-qubit unitaries $\vec{U}$, Eq.~\eqref{log:equiv-P} provides a \emph{sufficient} criterion for the circuit equivalence $F'(\vec{U}) = e^{i\phi}F(\vec{U})$.

To analyze Eq.~\eqref{log:equiv-P}, we leverage the fact that Clifford circuits conjugate Pauli operators to Pauli operators.
We define the adjoint stabilizer operator 
\begin{align}
    V_i(P_i) = C_{m:i} P_i^{(k_i)} C_{m:i}^\dagger ,
\end{align}
where $C_{m:i} = C_m \cdots C_i$ represents the cumulative Clifford operation from layer $i$ to $m$.
By propagating the Pauli operators to the end of the circuit, the Pauli-substituted circuit $F(\vec{P})$ can be rewritten as:
\begin{align}
F(\vec{P}) = V_m(P_m) \cdots V_1(P_1) C_{m:0}. 
\label{log:def-FVC}
\end{align}
Similarly, the second circuit $F'(\vec{P})$ can be expressed using its corresponding operators $V'_i(P_i)$ and $C'_{m:0}$ as:
\begin{align}
F'(\vec{P}) = V'_m(P_m) \cdots V'_1(P_1) C'_{m:0}.
\label{log:def-FVC-prime}
\end{align}
From Eqs.~\eqref{log:def-FVC}, \eqref{log:def-FVC-prime}, it is evident that if the individual operators match layer-by-layer,
\begin{align}
V_i(P_i) = V'_i(P_i), \quad \forall \,i,
\label{log:equiv-V}
\end{align} 
and the cumulative Clifford operations agree,
\begin{align}
C'_{m:0} = e^{i\phi}C_{m:0},
\label{log:equiv-CC}
\end{align}
then Eq.~\eqref{log:equiv-P} naturally holds.
Thus, it is shown that the converse is also true.

Finally, since each adjoint stabilizer operator $V_i(P_i)$ is essentially a Pauli string up to a sign, it admits a binary symplectic representation. 
Collecting these vector representations into the decision matrix $B$, we show that the layer-wise equality (Eq.~\eqref{log:equiv-V}) is equivalent to the equality of the decision matrices $B = B'$. 
This reduces the exponential verification problem to an efficient polynomial-time matrix comparison.

The same logical structure also applies when the single-qubit unitaries appear in different orders in the two circuits, i.e.,
\begin{align}
F(\vec{U}) &:= C_m U_m^{(k_m)} C_{m-1} U_{m-1}^{(k_{m-1})} \cdots C_1 U_1^{(k_1)} C_0, \notag \\
F'(\vec{U_\sigma}) &:= C'_m U_{\sigma(m)}^{(l_m)} C'_{m-1} U_{\sigma(m-1)}^{(l_{m-1})} \cdots C'_1 U_{\sigma(1)}^{(l_1)} C'_0. \notag
\end{align}
The detailed treatment of such permutations is given in the Appendix~\ref{app:rearrange}.

\section{Discussion}

\subsection{Applicability}
Our method applies to quantum circuits in which all non-Clifford operations are single-qubit unitaries that appear in both circuits.
As a result, transformations that rely on decomposing or reconstructing multi-qubit non-Clifford gates are outside the scope of the present framework.

Despite this restriction, the considered circuit class is expressive enough to represent arbitrary unitaries.
Circuits composed of alternating Clifford layers and single-qubit unitaries are known to be universal and therefore serve as standard \emph{ansatz} in variational quantum algorithms (VQAs) \cite{cerezo2021vqa} and related parametric models.
The proposed method is particularly well suited for equivalence-checking in the VQA setting.
In VQAs, the \emph{ansatz} structure is fixed before compilation, and the single-qubit gates are treated as symbolic parameters;
correctness therefore requires equivalence of the compilation to hold for \emph{all} parameter values.
Our framework provides an efficient and complete verification procedure precisely for this parametric equivalence.

Another important application of our framework is the verification of Hamiltonian-simulation circuits based on Trotter decomposition \cite{lloyd1996universal}. 
In quantum simulations of many-body systems, the time-evolution operator is commonly compiled into a sequence of elementary gates with continuous rotation parameters determined by the evolution time and the Hamiltonian coefficients. 
These circuits naturally exhibit the Clifford-U structure, making them directly amenable to our method. 
This is particularly useful in practice, since Trotterized circuits often undergo extensive optimization to reduce gate overhead and circuit depth \cite{wecker2014gate, childs2018toward, nam2018automated}. 
Our framework then provides a rigorous means of certifying that such optimized implementations remain functionally equivalent to the original Trotter sequence for all parameter values, rather than only for selected parameter instantiations. 
As a result, it offers a guarantee that optimaization preserves the intended dynamics exactly.

\subsection{Limitations Imposed by NQP- and QMA-Hardness Results}
While our method provides strong guarantees for parametric equivalence, it does not fully solve equivalence checking for a specific quantum circuit.
This limitation reflects the intrinsic computational hardness (NQP-hard, QMA-hard) of equivalence-checking of quantum circuits once non-Clifford gates are involved.
Even so, when used to verify circuits under a specific parameter assignment $\vec{U}$, our method remains highly efficient. 
It operates as a filter that will never mistakenly validate non-equivalent circuits, though it might fail to recognize two equivalent circuits if their equivalence relies entirely on those specific parameter values.

However, the probability of such an accidental coincidence is rigorously bounded by the mathematical properties of parameterized circuits. 
As demonstrated by Peham et al., for any two circuits that are not functionally equivalent, the probability that a random parameter instantiation yields identical unitary operations is strictly zero \cite{Peham2023EquivalenceCheckingParameterized}. 
This stems from the fact that the matrix elements of these circuits are analytic functions of the rotation angles, and the set of zeros for non-identical analytic functions constitutes a set of measure zero. 
Therefore, our method's reliance on functional equivalence is well-justified: any structural discrepancy identified by our decision matrices represents a non-equivalence that will persist for almost all possible parameter values.

\subsection{Advantages over Heuristic and Local Verification Methods}

Our method detects global circuit equivalences that cannot be captured by a naive layer-by-layer comparison of individual Clifford operations.
Even when corresponding Clifford layers differ, nontrivial cancelations across layers may yield identical overall transformations, which local checks fail to identify.
The proposed approach is also robust to reordering of single-qubit unitaries and changes in their acting qubits.
Unlike heuristic rewrite-based techniques, our method follows a fixed, deterministic procedure.
This structural robustness suggests that the framework remains applicable even as compiler architectures evolve.
Furthermore, this deterministic approach may serve as a foundation for discovering new, provably rewrite principles.

\section{numerical experiment}
To empirically assess that the equivalence-checking procedure implied by Theorem~\ref{thm:equiv-theory} can be executed in polynomial time in the number of qubits and the number of layers, we conducted numerical experiments.
We generated quantum circuits using PyZX(0.8.0)~\cite{pyzx} and applied the built-in optimization routines of the same library. 
We then tested the equivalence between each optimized circuit and its original circuit, following Theorem~\ref{thm:equiv-theory}, and measured the runtime required for this decision.
Our proposed method was implemented using Qiskit(1.4.5)~\cite{qiskit}. 
Circuit instances were stored in QASM(1.0.1)~\cite{qasm}.
In addition, to validate the correctness of the outcomes produced by our method and to benchmark runtimes, we also performed equivalence checking using MQT-qcec (QCEC~2.8.2), a tool provided by the Munich Quantum Toolkit~\cite{2021qcec, 2021QCECadvanced}. 
QCEC enables fast equivalence checking by combining multiple techniques, including decision diagrams and the ZX-calculus.

All experiments were performed on a MacBook Air.
The machine was equipped with an Apple M4 CPU and $24$~GB of memory, and ran macOS~15.5.

\subsection{Setup}
In our experiments, we used as a benchmark a parametric quantum circuit $F$ of the form in Eq.~\eqref{eq:experiment_situation}:
\begin{align}
  F(\vec{U}) &= C_m U_m^{(k_m)} C_{m-1} U_{m-1}^{(k_{m-1})} \cdots C_1 U_1^{(k_1)} C_0 .
  \label{eq:experiment_situation}
\end{align}
Here, each $C_i$ is an $n$-qubit Clifford circuit generated using PyZX's \texttt{CNOT\_HAD\_PHASE\_circuit} routine.
Specifically, we constructed depth-$10$ Clifford circuits whose gate composition consists of CNOT, Hadamard, and phase gates in the ratio $20\%$, $40\%$, and $40\%$, respectively, and used them as the $C_i$.
Each $U_i^{(k_i)}$ is a single-qubit gate acting on the $k_i$-th qubit, constructed as a product of rotations around the $Z$, $X$, and $Z$ axes.
The rotation angles for these gates were chosen uniformly at random from the interval $[0, 2\pi)$.
The index $k_i$ was chosen as a randomly generated integer.

As a comparison instance, we constructed a circuit $F'$ that is designed to be equivalent to $F$. 
To generate $F'$, we retained the identical single-qubit rotation gates $U_i$ but assigned them to new qubit indices $l_i$, which were sampled independently at random and separately from the original indices $k_i$. 
To preserve the overall equivalence of the circuit under this qubit reassignment, each corresponding Clifford layer $C'_i$ was constructed by padding the original $C_i$ with appropriate SWAP gates. 
Subsequently, we applied an optimization step to these modified Clifford layers using PyZX (\texttt{optimize\_process\_pyzx}). 
This systematic construction yields the following circuit:
\begin{align}
  F'(\vec{U}) &= C'_m U_m^{(l_m)} C'_{m-1} U_{m-1}^{(l_{m-1})} \cdots C'_1 U_1^{(l_1)} C'_0,
\end{align}
which is globally equivalent to $F$ by design.

We also generated a non-equivalent instance $F_{\mathrm{err}}$ and performed equivalence checking against $F$.
To construct the circuit $F_{\mathrm{err}}$, we took the previously generated circuit $F'$ which is equivalent to $F$ and appended to it a single Pauli gate (either $X$ or $Z$) chosen uniformly at random.
The insertion location was chosen as the end of a randomly selected $C'_i$, and the target qubit was also selected uniformly at random.
This Pauli insertion is intended to artificially emulate a compilation-time error. 
Consequently, $F_{\mathrm{err}}$ is expected to be theoretically non-equivalent to $F$.

Finally, we constructed another non-equivalent comparison circuit $G$ and likewise tested it for equivalence with $F$:
\begin{align}
  G(\vec{U}) &= D_m U_m^{(q_m)} D_{m-1} U_{m-1}^{(q_{m-1})} \cdots D_1 U_1^{(q_1)} D_0 .
\end{align}
The circuit $G$ is composed of $n$-qubit Clifford circuits $D_i$ generated a new using \texttt{CNOT\_HAD\_PHASE\_circuit}.
The single-qubit gates $U_i$ use the same rotation angles as in $F$, while the target-qubit indices $q_i$ were sampled uniformly at random.
With this construction, except for the negligible probability of accidentally generating an equivalent circuit, $G$ is expected to be theoretically non-equivalent to $F$.

\subsection{Experimental procedure}

Because existing state-of-the-art tools such as QCEC exhibit exponential runtime scaling, performing systematic comparisons for large quantum circuits is computationally prohibitive. 
To effectively characterize both relative performance and scalability, we divided our numerical evaluation into two distinct regimes: a small-scale comparison ($5 \le n \le 10$) to directly benchmark our proposed method against QCEC, and a large-scale evaluation ($10 \le n$) to demonstrate the polynomial scaling of our approach.

In the small-scale regime, we measured the verification runtimes across various number of layers $m$ for equivalent pairs ($F, F'$), as well as for non-equivalent pairs constructed by either injecting local errors ($F, F_{\text{err}}$) or generating independent circuits ($F, G$).
For the large-scale regime, we restricted our measurements exclusively to the proposed method and focused entirely on equivalent circuit pairs ($F, F'$). 
This restriction is algorithmically motivated by the evaluation process of Theorem~\ref{thm:equiv-theory}. 
To verify the equivalence, our method compares the decision matrices (Definition~\ref{def:B,Bprime}) column by column, which corresponds to checking the matching condition for each operator $V$ (Definition~\ref{def:V}) as expressed in Eq.~\eqref{log:equiv-V}. 
The moment a single mismatch is detected between any corresponding columns, the algorithm immediately terminates and declares the circuits non-equivalent.
Consequently, verifying equivalent circuits necessitates the exhaustive evaluation of all columns, thereby manifesting the worst-case computational complexity of the algorithm.
By subjecting our method to these worst-case instances, we aim to definitively confirm its practical efficiency for circuit sizes well beyond the reach of conventional verification techniques.

\section{Results of numerical experiments}

\subsection{Runtime comparison between the proposed method and QCEC (Experiment~1)}

\begin{figure}[t]
  \centering
  \includegraphics[height=40em]{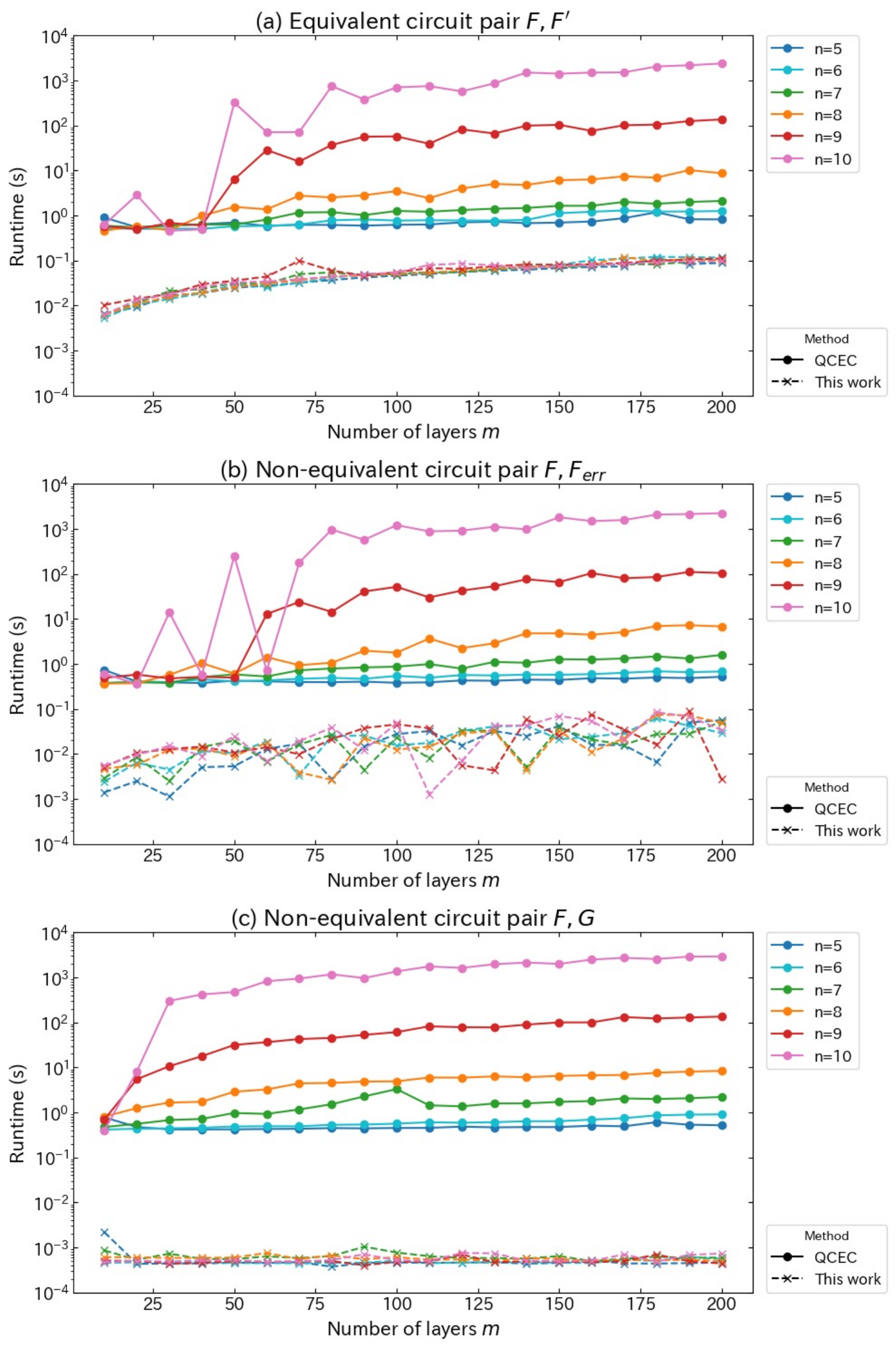}
  \caption{Relationship between the number of layers $m$ of the circuit pair and the runtime of the proposed method and QCEC (logarithmic scale on the vertical axis).
  We evaluate the runtime of these methods for pairs of equivalent circuits in (a) and non-equivalent circuits in (b) and (c). In (b), $F_\text{err}$ was generated by inserting an Pauli error into $F$. In (c), $F$ and $G$ are generated independently.}
  \label{fig:time-comparision}
\end{figure}

We first performed equivalence checking for the equivalent circuit pair $F$ and $F'$.
For all instances, both QCEC and the proposed method concluded that the circuits are equivalent.
The measured runtimes of the proposed method and QCEC are shown in Fig.~\ref{fig:time-comparision}(a).
These results confirm that, although the runtime of QCEC increases exponentially as the number of qubits grows, the runtime of the proposed method increases only marginally.

We next performed equivalence checking for the non-equivalent circuit pair $F$ and $F_{\text{err}}$.
For all instances, both QCEC and the proposed method concluded that the circuits are non-equivalent.
The measured runtimes of the proposed method and QCEC are shown in Fig.~\ref{fig:time-comparision}(b).
Comparing this result with Fig.~\ref{fig:time-comparision}(a), we see that the runtime required by QCEC is roughly the same regardless of whether the input circuits are equivalent.
In contrast, the runtime of the proposed method becomes more variable than in the equivalent-case experiments.
This variability arises because the column at which the decision matrices $B$ and $B'$ (defined in Definition~\ref{def:B,Bprime}) first disagree depends on the layer of $F'$ into which the Pauli gate is inserted.
The proposed method declares non-equivalence as soon as it detects a mismatching column while constructing $B$ and $B'$.
Therefore, the runtime of the proposed method varies depending on when the first mismatching column is encountered.

We further performed equivalence checking for the non-equivalent circuit pair $F$ and $G$.
For all instances, both QCEC and the proposed method concluded that the circuits are non-equivalent.
The measured runtimes of the proposed method and QCEC are shown in Fig.~\ref{fig:time-comparision}(c).
Comparing this result with Fig.~\ref{fig:time-comparision}(a), we find that the runtime required by QCEC does not differ substantially depending on whether the input circuits are equivalent.
On the other hand, the runtime of the proposed method is drastically shorter than in the equivalent-case experiments.
We attribute this to the fact that $F$ and $G$ are generated independently, so their Clifford gates differ in most layers.
As a consequence, the proposed method can often declare non-equivalence immediately after constructing and comparing just a single column of the decision matrices $B$ and $B'$.

\subsection{Observing the runtime of the proposed method (Experiment 2)}

\begin{figure}[t]
  \centering
  \includegraphics[height=27em]{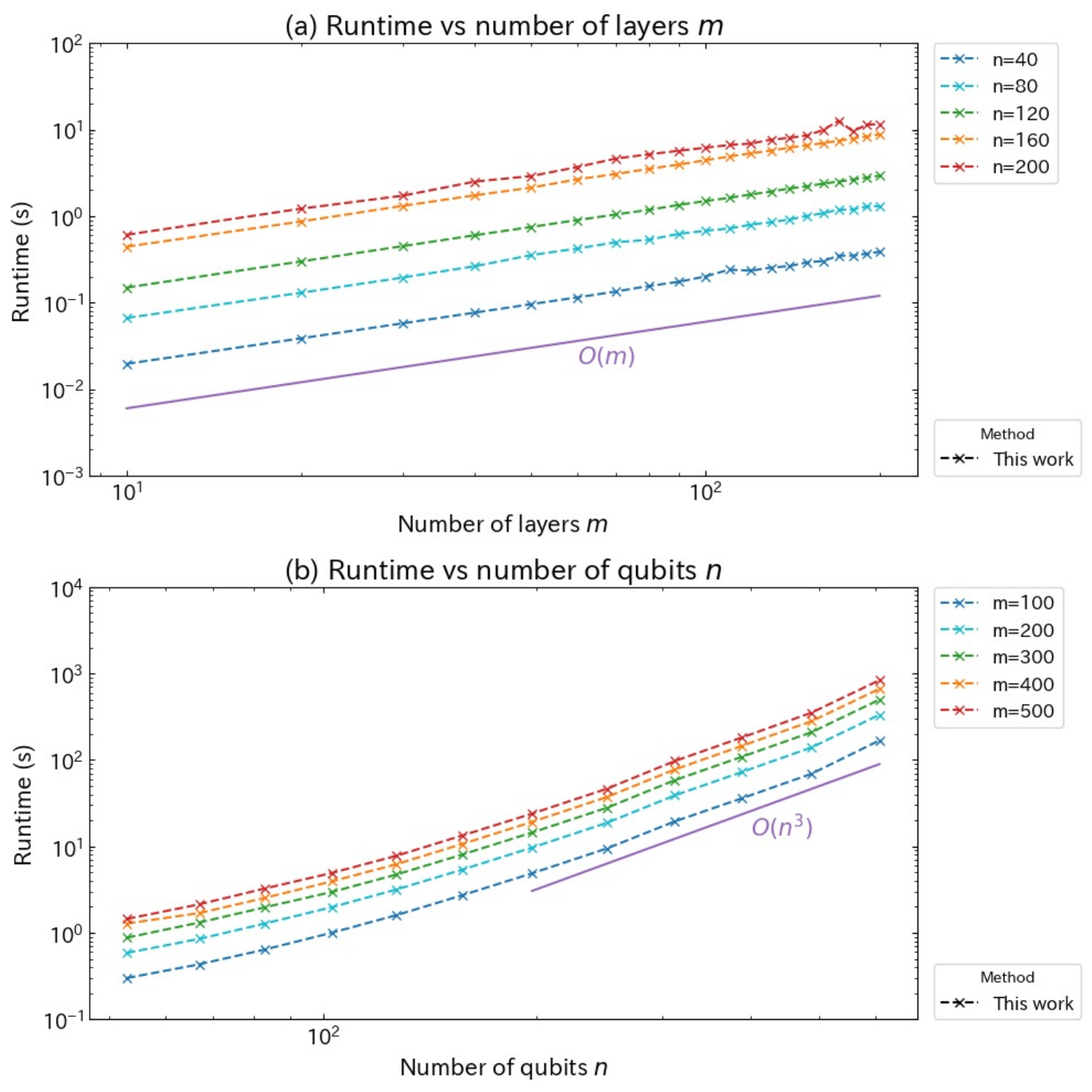}
  \caption{Relationship between the number of layers~$m$ of the equivalent circuit pair $F,F'$ and the runtime of the proposed method (log-log plot).
  We evaluate the runtime of the proposed method on circuit pairs. In (a), we measure the runtime as a function of the number of qubits $n$. In (b), we measure the runtime as a function of the number of layers $m$.}
  \label{fig:Scalable}
\end{figure}

For all instances of the equivalent circuit pair $F$ and $F'$, the proposed method successfully verified their equivalence.
First, Fig.~\ref{fig:Scalable}(a) shows the runtime dependence on the number of layers $m$ for $n \in \{40, 80, 120, 160, 200\}$.
The results confirm that the runtime scales strictly linearly ($O(m)$) with the number of layers, maintaining this efficiency up to 200 qubits.

Furthermore, Fig.~\ref{fig:Scalable}(b) illustrates the runtime dependence on the number of qubits $n$ on a log-log scale.
To eliminate hardware artifacts such as cache aliasing, we deliberately chose $n$ as logarithmically distributed prime numbers between $53$ and $607$.
While Python execution overheads dominate at small $n$, the curves become distinctly linear for $n \ge 197$, clearly demonstrating an empirical $O(n^3)$ scaling with respect to the number of qubits.

\subsection{Discussion of the experimental results}

Our experiments confirm that the proposed method verifies Clifford-$U$ circuits in polynomial time. 
Theoretically, the optimal time complexity is bounded by $O(g \cdot n)$, where $g$ is the total number of gates. 
Since the depth of each Clifford layer in our setup is constant ($d=10$), the total number of gates scales as $g \propto m \cdot n$. 
Consequently, the theoretically optimal runtime is expected to scale as $O(m \cdot n^2)$.

The theoretical $O(m)$ scaling was perfectly captured in Experiment 2. 
The observed $O(n^3)$ scaling with qubits, deviating from the optimal $O(n^2)$, stems entirely from our implementation. 
Specifically, Qiskit's \texttt{Clifford} class relies on dense binary matrix multiplications. 
Developing a dedicated implementation with optimized bit-wise tableau updates would resolve this overhead and achieve the optimal $O(n^2)$ scaling.

Nevertheless, our method's robust polynomial-time scaling provides a decisive practical advantage. 
As demonstrated in Experiment 1, while general-purpose tools suffer from exponential runtime explosions as the qubit count increases, our structural approach enables fast and scalable equivalence checking for complex quantum circuits.

\section{Conclusion}
In conclusion, we have proposed a novel equivalence-checking framework for Clifford-$U$ circuits. 
By expanding the single-qubit unitaries into the Pauli basis and exploiting the underlying algebraic structure of each term, our method leverages the Clifford tableau formalism to achieve verification in polynomial time. 
This approach provides a necessary and sufficient criterion for parametric equivalence, guaranteeing that two circuits are functionally identical for all possible single-qubit gate assignments. 
Furthermore, for fixed parameter instantiations, it serves as a safe verification.

By utilizing the proposed method, one can rigorously perform complete verification on the compilation results of parameterized quantum circuits used in variational quantum algorithms (VQAs) and circuits derived from Trotter-Suzuki decompositions. 
Moreover, we anticipate that this framework will contribute significantly to the validation of emerging quantum compilers and facilitate the discovery of novel circuit optimization passes.

\begin{acknowledgments}
This work is supported by MEXT Quantum Leap Flagship Program (MEXT Q-LEAP) Grant No. JPMXS0120319794, JST COI-NEXT Grant No. JPMJPF2014, and JST Moonshot R\&D Grant No. JPMJMS256J.
K.M. is supported by JST FOREST Grant No. JPMJFR232Z and JSPS KAKENHI Grant No. 23H03819.
Y.M. is supported by JSPS KAKENHI Grant No. JP25KJ1712.
\end{acknowledgments}

\appendix

\section{Proof of Theorem}
\label{app:non-rearrange}
In this Appendix, we prove the main equivalence criterion.
We first show that equivalence for all assignments of single-qubit unitaries is equivalent to equivalence under Pauli substitutions.
We then rewrite the Pauli-substituted circuits, and separates the $\vec{P}$-dependent and $\vec{P}$-independent parts of the equivalence condition.
Finally, we reduce equality of $W(\vec{P})$ for all $\vec{P}$ to a layerwise condition on the adjoint images $V_i$, and express that condition as equality of the Decision Matrices $B$ and $B'$.

\begin{definition}[Cumulative Adjoint Products]
\label{def:W}
Based on the adjoint images of the substituted single-qubit Pauli operators introduced in Definition~\ref{def:V},  we also define the cumulative adjoint products
\begin{align}
W(\vec{P}) = V_m(P_m) \dots V_1(P_1), 
\label{eq:def-W-1}
\\
W'(\vec{P}) = V'_m(P_m) \dots V'_1(P_1).  
\label{eq:def-W-2}
\end{align}  
\end{definition}

We begin by justifying that it suffices to test the circuit functions on Pauli substitutions.
This step removes the continuous parameters in $\mathcal{U}$ and replaces them with a finite set $\mathcal{P}$.


\begin{lemma}
Let $\mathcal{U}$ be the set of all single-qubit unitary operators.
There exists a constant phase $\phi \in \mathbb{R}$ such that
\begin{align}
F'(\vec{U}) = e^{i\phi} F(\vec{U}), \quad \forall \, \vec{U} \in \mathcal{U}^m
\label{eq:equiv-U}
\end{align}
if and only if
\begin{equation}
F'(\vec{P})
= e^{i\phi}\, F(\vec{P}),
\quad
\forall\, \vec{P} \in \mathcal{P}^{m},
\label{eq:equiv-P}
\end{equation}
holds with the same phase $\phi$, where $\mathcal{P} = \{I,X,Y,Z\}$.
$F',F$ are defined in Definition \ref{def:F,Fprime}.
\end{lemma}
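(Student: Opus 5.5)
The plan is to prove the two directions separately. The forward direction is immediate. The reverse direction follows from multilinearity of $F$ and $F'$ in the single-qubit slots, combined with the fact that the Paulis span $\mathcal{U}_2$ linearly.

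\emph{Forward direction.} Each Pauli $P\in\mathcal{P}$ is itself a single-qubit unitary, so $\mathcal{P}^m\subset\mathcal{U}^m$. Specializing Eq.~\eqref{eq:equiv-U} to $\vec{U}=\vec{P}$ gives Eq.~\eqref{eq:equiv-P} with the same $\phi$.

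\emph{Reverse direction.} First I would extend $F$ and $F'$ from unitaries to arbitrary $2\times 2$ complex matrices $M_1,\dots,M_m$, using the same formulas Eqs.~\eqref{eq:def-circuit_1}--\eqref{eq:def-circuit_2}. Next I would observe that the embedding $M\mapsto M^{(k)}=I^{\otimes(k-1)}\otimes M\otimes I^{\otimes(n-k)}$ is linear in $M$. Since matrix multiplication is bilinear, $F(M_1,\dots,M_m)$ is then linear in each argument separately, i.e.\ multilinear; the same holds for $F'$. This multilinearity is the key structural point I would state explicitly. It holds because each $U_i$ appears exactly once in each circuit, although at different positions $k_i$ versus $l_i$ and sandwiched between different Cliffords. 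Note that the two circuits use the same index $i$ for the same slot $U_i$.

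Then I would expand each $U_i$ in the Pauli basis, $U_i=\sum_{P\in\mathcal{P}} c_{i,P}\,P$ with $c_{i,P}=\tfrac12\operatorname{tr}(P U_i)\in\mathbb{C}$. This expansion is valid because $\{I,X,Y,Z\}$ is a basis of the space of $2\times 2$ complex matrices. Multilinearity gives
\begin{align}
F(\vec{U})=\sum_{\vec{P}\in\mathcal{P}^m}\Big(\prod_{i=1}^m c_{i,P_i}\Big)F(\vec{P}),
\end{align}
and the identical expansion with the same coefficients holds for $F'$. Substituting Eq.~\eqref{eq:equiv-P} term by term yields $F'(\vec{U})=e^{i\phi}F(\vec{U})$. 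The phase is the same for every $\vec{U}$ because $\phi$ in Eq.~\eqref{eq:equiv-P} is a single constant independent of $\vec{P}$.

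\emph{Main obstacle.} The only real care point is the phase. The statement demands one common $\phi$ in Eq.~\eqref{eq:equiv-P} for all $\vec{P}$, not a $\vec{P}$-dependent phase, and the linear-combination argument genuinely needs this. I would remark that with $\vec{P}$-dependent phases the sum would not factor, so the hypothesis as stated is exactly what is required. I would also note the by-product used later in the paper: for a fixed $\vec{U}$ the same expansion shows that Eq.~\eqref{eq:equiv-P} is sufficient, although not necessary, for $F'(\vec{U})=e^{i\phi}F(\vec{U})$.
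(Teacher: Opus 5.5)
Your proposal is correct and follows the paper's own proof: for the forward direction you specialize using $\mathcal{P}^m\subset\mathcal{U}^m$, and for the converse you expand each $U_i$ in the Pauli basis and substitute the Pauli identity term by term with the common phase $\phi$. Your explicit multilinearity argument, which expands the single-qubit $U_i$ rather than the embedded $U_i^{(k)}$, usefully justifies a point the paper leaves implicit: the same coefficients $\epsilon_{\vec{P}}$ appear in both expansions even though $U_i$ acts on qubit $k_i$ in $F$ and on qubit $l_i$ in $F'$.
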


\begin{proof}
First, assume Eq.~\eqref{eq:equiv-U} holds for all $\vec{U}$. Since the set of Pauli strings $\mathcal{P}^{m}$ is a subset of all possible unitary tuples, the relation must specifically hold for any $\vec{U} = \vec{P} \in \mathcal{P}^{m}$. Thus, \eqref{eq:equiv-P} holds trivially.

Next, assume \eqref{eq:equiv-P} holds.
Each $U_i^{(k)}$ can be uniquely expanded in the $n$-qubit Pauli basis
$\{I_n, X^{(k)}, Y^{(k)}, Z^{(k)}\}$ as
\begin{equation}
U_i^{(k)} = \alpha_i I_n + \beta_i X^{(k)} + \gamma_i Y^{(k)} + \delta_i Z^{(k)}.
\label{eq:pauli-expansion}
\end{equation}
Using this expansion, the circuits can be expressed as
\begin{align}
F(\vec{U}) 
&= \sum_{\vec{P} \in \mathcal{P}^{\otimes m}}
\epsilon_{\vec{P}}\, F(\vec{P}) \label{eq:circuit-expansion}\\
F'(\vec{U}) 
&= \sum_{\vec{P} \in \mathcal{P}^{\otimes m}}
\epsilon_{\vec{P}}\, F'(\vec{P}).
\label{eq:circuit-expansion-prime}    
\end{align}
Here, $\epsilon_{\vec{P}}$ is the product of the coefficients associated with the Pauli operators constituting $\vec{P} = (P_1, \dots, P_m)$. Specifically, if we define the coefficient function $c_i(P)$ for the $i$-th unitary $U_i^{(k)}$ based on Eq.~\eqref{eq:pauli-expansion} as
\begin{equation}
c_i(P) = 
\begin{cases}
\alpha_i & \text{if } P = I_n, \\
\beta_i  & \text{if } P = X^{(k)}, \\
\gamma_i & \text{if } P = Y^{(k)}, \\
\delta_i & \text{if } P = Z^{(k)},
\end{cases}
\end{equation}
then $\epsilon_{\vec{P}}$ is given by
\begin{equation}
\epsilon_{\vec{P}} = \prod_{i=1}^m c_i(P_i).
\label{eq:def-epsilon}
\end{equation}
Substituting \eqref{eq:equiv-P} into the Pauli expansion in Eq.~\eqref{eq:circuit-expansion-prime}, we obtain
\begin{align*}
F'(\vec{U})
&=
\sum_{\vec{P} \in \mathcal{P}^{\otimes m}}
\epsilon_{\vec{P}}\, F'(\vec{P}) \\
&=
\sum_{\vec{P} \in \mathcal{P}^{\otimes m}}
\epsilon_{\vec{P}}\, e^{i\phi} F(\vec{P}) \\
&=
e^{i\phi}
\sum_{\vec{P} \in \mathcal{P}^{\otimes m}}
\epsilon_{\vec{P}}\, F(\vec{P})
=
e^{i\phi} F(\vec{U}).
\end{align*}
The last equality follows from \eqref{eq:circuit-expansion}.
\end{proof}

By this lemma, we can perform equivalence-checking with Pauli-substituted instances $F(\vec{P})$ and $F'(\vec{P})$.
The next step is to rewrite the equation for easy handling.

\begin{lemma}
\label{lem:WW=CC}
Equation \eqref{eq:equiv-P} holds 
if and only if
\begin{equation}
W^\dagger(\vec{P})\, W'(\vec{P})
= e^{i\phi}\, C_{m:0} C_{m:0}'^\dagger, 
\quad
\forall\, \vec{P}\in \mathcal{P}^{m}.
\label{eq:equiv-WW=CC}
\end{equation}
\end{lemma}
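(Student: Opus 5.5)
The plan is to rewrite each Pauli-substituted circuit as ``cumulative adjoint product times Clifford backbone'', and then rearrange the equation using only unitarity. Concretely, I will first establish the factorizations
\begin{align}
F(\vec{P}) = W(\vec{P})\, C_{m:0}, \qquad F'(\vec{P}) = W'(\vec{P})\, C'_{m:0},
\label{eq:plan-factor}
\end{align}
for every $\vec{P}\in\mathcal{P}^m$. Here $W,W'$ are as in Definition~\ref{def:W} and $C_{m:0}$ is as in Definition~\ref{def:V}. Once Eq.~\eqref{eq:plan-factor} is available, the lemma is a two-line algebraic manipulation.

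To prove Eq.~\eqref{eq:plan-factor}, I will push each Pauli to the left end of the circuit by inserting identities $C_{m:i}^\dagger C_{m:i}$. The cleanest bookkeeping is an induction on $j$ (from $j=m$ down to $1$) for the claim
\begin{align*}
C_m P_m^{(k_m)} C_{m-1}\cdots C_j P_j^{(k_j)} = V_m(P_m)\cdots V_j(P_j)\, C_{m:j}.
\end{align*}
The base case $j=m$ is $C_m P_m^{(k_m)} = (C_m P_m^{(k_m)} C_m^\dagger) C_m = V_m(P_m) C_{m:m}$. For the inductive step, I multiply on the right by $C_{j-1}P_{j-1}^{(k_{j-1})}$. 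This gives $C_{m:j}C_{j-1} = C_{m:j-1}$, followed by
\begin{align*}
C_{m:j-1}P_{j-1}^{(k_{j-1})} = V_{j-1}(P_{j-1})\, C_{m:j-1},
\end{align*}
which is exactly the definition of $V_{j-1}$. Taking $j=1$ and then multiplying by $C_0$ on the right yields $F(\vec{P}) = W(\vec{P})C_{m:0}$. The primed circuit is handled identically, with $l_i$ in place of $k_i$.

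Given Eq.~\eqref{eq:plan-factor}, Eq.~\eqref{eq:equiv-P} reads
\begin{align*}
W'(\vec{P})\,C'_{m:0} = e^{i\phi}\, W(\vec{P})\,C_{m:0}.
\end{align*}
Since $W(\vec{P})$ is a product of unitaries (conjugated Paulis), it is unitary, and so is $C'_{m:0}$. I therefore multiply on the left by $W^\dagger(\vec{P})$ and on the right by $C_{m:0}'^\dagger$ to obtain Eq.~\eqref{eq:equiv-WW=CC}. Conversely, I multiply Eq.~\eqref{eq:equiv-WW=CC} on the left by $W(\vec{P})$ and on the right by $C'_{m:0}$ to recover Eq.~\eqref{eq:equiv-P}. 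Both operations are invertible, so the equivalence holds for each $\vec{P}$ individually, with the same phase $\phi$, and hence for all $\vec{P}$.

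The only step requiring care is the telescoping in Eq.~\eqref{eq:plan-factor}. The point to check is that the convention $C_{m:i}$ includes $C_i$ itself, and this is precisely what makes each $V_i$ absorb the Clifford layer to the left of $P_i$. Everything else is direct unitary algebra, so I expect no genuine obstacle.
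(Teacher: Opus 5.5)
Your proposal is correct and matches the paper's proof. Both factor $F(\vec P)=W(\vec P)\,C_{m:0}$ and $F'(\vec P)=W'(\vec P)\,C'_{m:0}$, then use invertibility of the unitaries $W(\vec P)$ and $C'_{m:0}$ to pass between the two identities with the same phase $\phi$ for every $\vec P$. The only difference is that you verify the factorization explicitly by induction, correctly using that $C_{m:i}$ includes $C_i$, whereas the paper asserts it directly from the definition of $W$.
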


\begin{proof}
By definition of the cumulative adjoint operator $W(\vec{P})$, 
\begin{align}
    F(\vec{P}) = W(\vec{P})\, C_{m:0}, \notag \\
    F'(\vec{P}) = W'(\vec{P})\, C'_{m:0}, \notag
\end{align}
holds for any $\vec{P}$.
Since $W(\vec{P})$, $W'(\vec{P})$, $C_{m:0}$, and $C'_{m:0}$ are all unitary operators, they are invertible. 
Thus, for any $\vec{P}$, the condition in Eq. \eqref{eq:equiv-P} is equivalent to Eq.~\eqref{eq:equiv-WW=CC} through the following algebraic transformations:
\begin{align*}
&F'(\vec{P}) = e^{i\phi} F(\vec{P}) \\
&\iff \quad 
W'(\vec{P})\, C'_{m:0} = e^{i\phi} W(\vec{P})\, C_{m:0} \\
&\iff \quad 
W^\dagger(\vec{P})\, W'(\vec{P})\, C'_{m:0} = e^{i\phi} C_{m:0} 
\\
&\iff \quad 
W^\dagger(\vec{P})\, W'(\vec{P}) = e^{i\phi} C_{m:0} C'^\dagger_{m:0} 
.
\end{align*}
This concludes the proof.
\end{proof}

Lemma~\ref{lem:WW=CC} rewrites the function-equivalence condition as an operator identity in which only the left-hand side depends on the Pauli substitution $\vec{P}$.
Next, we evaluate the $\vec{P}$-independent Clifford backbones by considering the trivial substitution case.

\begin{lemma}
\label{lem:CC-WW}
Equation \eqref{eq:equiv-WW=CC} holds if and only if the following equations hold simultaneously:
\begin{equation}
  C'_{m:0} = e^{i\phi} C_{m:0},
\label{eq:equiv-CC}
\end{equation}
\begin{equation}  
W(\vec{P}) = W'(\vec{P}),
\quad
\forall\, \vec{P} \in \mathcal{P}^{m}.
\label{eq:equiv-WW}
\end{equation}
\end{lemma}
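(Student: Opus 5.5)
The plan is to exploit the fact that the right-hand side of Eq.~\eqref{eq:equiv-WW=CC} does not depend on $\vec{P}$, so it can be pinned down by evaluating the identity at one convenient substitution. I would choose the trivial substitution $\vec{P}_0 = (I,\dots,I)$. By Definition~\ref{def:V}, $V_i(I) = C_{m:i} I_n C_{m:i}^\dagger = I_n$ and likewise $V'_i(I) = I_n$ for every $i$. Definition~\ref{def:W} then gives $W(\vec{P}_0) = W'(\vec{P}_0) = I_n$.

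For the forward direction, I would assume Eq.~\eqref{eq:equiv-WW=CC} and specialize it to $\vec{P}_0$, which gives
\begin{equation*}
I_n = e^{i\phi}\, C_{m:0} C'^\dagger_{m:0}.
\end{equation*}
Multiplying on the left by $e^{-i\phi} C_{m:0}^\dagger$ and taking adjoints (using unitarity) yields $C'_{m:0} = e^{i\phi} C_{m:0}$. This is Eq.~\eqref{eq:equiv-CC} with the same phase $\phi$.

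Next I would substitute this back into the right-hand side of Eq.~\eqref{eq:equiv-WW=CC}:
\begin{equation*}
e^{i\phi} C_{m:0} C'^\dagger_{m:0} = e^{i\phi} e^{-i\phi} C_{m:0} C^\dagger_{m:0} = I_n .
\end{equation*}
Hence $W^\dagger(\vec{P}) W'(\vec{P}) = I_n$ for every $\vec{P}$. Since $W(\vec{P})$ is a product of unitaries and therefore unitary, left-multiplying by $W(\vec{P})$ gives $W'(\vec{P}) = W(\vec{P})$, which is Eq.~\eqref{eq:equiv-WW}.

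For the converse, I would assume Eqs.~\eqref{eq:equiv-CC} and \eqref{eq:equiv-WW}. Then the left-hand side is $W^\dagger(\vec{P}) W(\vec{P}) = I_n$, and by the computation above the right-hand side is $e^{i\phi} C_{m:0} (e^{i\phi}C_{m:0})^\dagger = I_n$, so Eq.~\eqref{eq:equiv-WW=CC} holds for all $\vec{P}$.

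None of these steps is a serious obstacle. The only point requiring care is phase bookkeeping: the $\phi$ in Eq.~\eqref{eq:equiv-CC} must be the same $\phi$ as in Eq.~\eqref{eq:equiv-WW=CC}, and the adjoint must be taken correctly so that $C C'^\dagger = e^{-i\phi} I_n$ turns into $C' = e^{+i\phi} C$. It is also worth noting explicitly that $\vec{P}_0 \in \mathcal{P}^m$, since $I \in \mathcal{P}$, so the specialization is legitimate.
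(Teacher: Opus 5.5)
Your proof is correct and follows the paper's argument essentially verbatim. You specialize Eq.~\eqref{eq:equiv-WW=CC} to the all-identity substitution, where $W = W' = I_n$, to extract $C'_{m:0} = e^{i\phi} C_{m:0}$; you then substitute this back to get $W^\dagger(\vec{P})\, W'(\vec{P}) = I_n$, and the converse is immediate, with your phase bookkeeping spelled out slightly more explicitly than in the paper.
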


\begin{proof}
If Eq.~\eqref{eq:equiv-WW=CC} holds, by considering the trivial case where all substituted operators are identities, i.e., $\vec{P}= [I_n, \ldots, I_n]$, we find
\[
e^{i\phi}\, C_{m:0} C'^\dagger_{m:0} = I_n,
\]
or equivaletly, Eq.~\eqref{eq:equiv-CC}.
This is because, from the definition in Eq.~\eqref{eq:def-V}, we have $V_i(I_n) = I_n$ and $V'_i(I_n) = I_n$ for all $i$ and consequently the cumulative products (Eq.~\eqref{eq:def-W-1}, ~\eqref{eq:def-W-2}) become $W(\vec{I}) = I_n$ and $W'(\vec{I}) = I_n$.
By substituting Eq.~\eqref{eq:equiv-CC} into Eq.~\eqref{eq:equiv-WW=CC}, the term on the right-hand side reduces to the identity operator $I_n$. This implies $W^\dagger(\vec{P}) W'(\vec{P}) = I_n$, or equivalently,
\[
W'(\vec{P}) = W(\vec{P}),
\quad
\forall\, \vec{P} \in \mathcal{P}^{m}.
\]
These arguments show that if Eq.~\eqref{eq:equiv-WW=CC} holds Eqs.~ \eqref{eq:equiv-CC} and \eqref{eq:equiv-WW} hold.
It is also trivial to see Eqs.~\eqref{eq:equiv-CC} and \eqref{eq:equiv-WW} imply Eq. \eqref{eq:equiv-WW=CC}.
We now have that Eq.~\eqref{eq:equiv-WW=CC} holds if and only if Eqs.~ \eqref{eq:equiv-CC} and \eqref{eq:equiv-WW} hold.
\end{proof}

At this point, the verification problem splits cleanly into two parts:
(i) equality of the Clifford backbones up to a global phase, Eq.~\eqref{eq:equiv-CC}, and
(ii) equality of the cumulative adjoint products for all Pauli substitutions, Eq.~\eqref{eq:equiv-WW}.
We now show that the latter condition is equivalent to a simple layerwise equality of the adjoint images $V_i$ and $V'_i$.

\begin{lemma}
\label{lem:VV}
Equation \eqref{eq:equiv-WW} holds if and only if
\begin{equation}  
  V_i(P_i) = V'_i(P_i),
  \qquad \forall \, i \in \{1,\dots,m\},
  \quad \forall \, P_i \in \mathcal{P}.
\label{eq:equiv-VV}
\end{equation}
\end{lemma}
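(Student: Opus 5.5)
The plan is to prove the two directions separately. The reverse direction (Eq.~\eqref{eq:equiv-VV} $\Rightarrow$ Eq.~\eqref{eq:equiv-WW}) is immediate. If $V_i(P_i)=V'_i(P_i)$ for every $i$ and every $P_i\in\mathcal{P}$, then for any $\vec{P}$ the products in Definition~\ref{def:W} agree factor by factor. Hence $W(\vec{P})=V_m(P_m)\cdots V_1(P_1)=V'_m(P_m)\cdots V'_1(P_1)=W'(\vec{P})$.

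For the forward direction, I will use substitutions that switch on a single layer at a time. Fix $i$ and $P\in\mathcal{P}$, and take $\vec{P}$ with $P_i=P$ and $P_j=I$ for all $j\neq i$. From Eq.~\eqref{eq:def-V} we have $V_j(I)=C_{m:j}C_{m:j}^\dagger=I_n$, and likewise $V'_j(I)=I_n$, as already used in the proof of Lemma~\ref{lem:CC-WW}. The product $W(\vec{P})$ therefore collapses to $V_i(P)$, and $W'(\vec{P})$ collapses to $V'_i(P)$. Eq.~\eqref{eq:equiv-WW} then gives $V_i(P)=V'_i(P)$ directly. Ranging over all $i\in\{1,\dots,m\}$ and all $P\in\mathcal{P}$ yields Eq.~\eqref{eq:equiv-VV}.

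I see no real obstacle. The only point that needs care is the phase bookkeeping. Eq.~\eqref{eq:equiv-WW} is an exact operator equality, not an equality up to phase: Lemma~\ref{lem:CC-WW} has already absorbed the global phase $e^{i\phi}$ into the Clifford condition Eq.~\eqref{eq:equiv-CC}. Consequently the single-layer substitution yields exact equality $V_i(P)=V'_i(P)$, including the sign $(-1)^{r_i}$. This is exactly what is needed later to identify Eq.~\eqref{eq:equiv-VV} with the equality of decision matrices $B=B'$, where the sign bit $r_i$ is recorded. Only the $X$ and $Z$ columns are stored there, since $V_i(Y)$ is determined by them through $Y=iXZ$ and conjugation being a homomorphism.
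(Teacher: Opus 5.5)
Your proof is correct and follows the paper's argument. The reverse direction is the factor-by-factor agreement of the products, and the forward direction uses the substitution $\vec{P}=[I_n,\ldots,I_n,P_i,I_n,\ldots,I_n]$, which collapses $W(\vec{P})$ to $V_i(P_i)$ and $W'(\vec{P})$ to $V'_i(P_i)$; the only cosmetic difference is that the paper phrases the forward direction as a proof by contradiction, whereas you argue directly.
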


\begin{proof}
If Eq~\eqref{eq:equiv-VV} holds for all $i$, then the ordered products defining $W(\vec{P})$ and $W'(\vec{P})$ clearly coincide for any $\vec{P}$, and hence $W(\vec{P}) = W'(\vec{P})$.

In contrast, we show that Eq~\eqref{eq:equiv-WW} forces the equality $V_i(P_i) = V'_i(P_i)$ to hold at every layer.
Indeed, if there existed an index $i$ and a Pauli operator $P$
such that $V_i(P_i) \neq V'_i(P_i)$ then you could isolate this mismatch by choosing a Pauli-substitution sequence of the form
\[
\vec{P} = [I_n, \ldots, I_n, P_i, I_n, \ldots, I_n],
\]
with a single nontrivial entry at position $i$.
For this choice, all other $V(P)$ and $V'(P)$ in the products $W(\vec{P})$ and $W'(\vec{P})$ reduce to the identity $I_n$, and hence
\[
W(\vec{P}) = V_i(P_i), \qquad
W'(\vec{P}) = V'_i(P_i).
\]
The assumed equality $W(\vec{P}) = W'(\vec{P})$ for any arrangement $\vec{P}$ would then immediately imply $V_i(P_i) = V'_i(P_i)$, contradicting the premise.
Therefore, if Eq~\eqref{eq:equiv-WW} holds, we conclude that $V_i(P_i) = V'_i(P_i)$ holds for all layers $i$.
This establishes the converse implication and completes the proof that Eqs.~\eqref{eq:equiv-WW} and \eqref{eq:equiv-VV} are equivalent.
\end{proof}

It remains to express the layerwise operator equalities in a form that can be checked efficiently.
Since each $V_i(P),V'_i(P_i)$ is the $n$-qubit Pauli operator, it admits a compact binary symplectic representation.
By collecting these representations for $P=X$ and $P=Z$ across all layers, we obtain the decision matrix $B$ (and analogously $B'$), so that the equality of all adjoint images reduces to the matrix equality $B=B'$.

\begin{lemma}
Equation \eqref{eq:equiv-VV} holds if and only if
\begin{align}
B' = B.
\label{eq:equiv-BB}
\end{align}
$B,B'$ are defined in Definition \eqref{def:B,Bprime}.
\end{lemma}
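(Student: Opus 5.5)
The plan is to prove both directions by treating the symplectic map $\bm{v}$ as a faithful encoding of signed Pauli strings. Then the finitely many conditions in Eq.~\eqref{eq:equiv-VV} reduce to the $2m$ columns of $B$ and $B'$.

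First, for each fixed $i$, the maps $P\mapsto V_i(P)$ and $P\mapsto V'_i(P)$ are conjugations by unitaries. Hence they preserve products, and
\[
V_i(I)=V'_i(I)=I_n, \qquad V_i(Y)=-i\,V_i(Z)V_i(X),
\]
using $Y=-iZX$ on the $k_i$-th (resp.\ $l_i$-th) qubit. The same identity holds for $V'_i$. Therefore, once $V_i(X)=V'_i(X)$ and $V_i(Z)=V'_i(Z)$, we automatically get $V_i(Y)=V'_i(Y)$ and $V_i(I)=V'_i(I)$. So Eq.~\eqref{eq:equiv-VV} is equivalent to the restricted statement $V_i(P)=V'_i(P)$ for all $i$ and all $P\in\{X,Z\}$. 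The converse restriction is trivial.

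Second, we show that the encoding in Definition~\ref{def:V} is injective on the set of operators that actually arise. Since $V_i(P)$ is the conjugate of a Hermitian Pauli, it is Hermitian. By the Clifford property it is a Pauli string times a phase, and Hermiticity forces that phase to be $\pm1$, which is the form $(-1)^{r}P_1\otimes\cdots\otimes P_n$ asserted in Definition~\ref{def:V}. Because the single-qubit correspondence $(z,x)$ is a bijection of $\mathcal{P}$ onto $\mathbb{F}_2^2$, and the tensor factors and the sign determine the operator and vice versa, two such operators are equal if and only if their vectors in $\mathbb{F}_2^{2n+1}$ coincide. Here we use that $\{P_1\otimes\cdots\otimes P_n\}$ is linearly independent, so $+Q\neq -Q$ and distinct strings are distinct operators.

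Finally, the columns of $B$ are exactly $\bm{v}_i(Z)$ for $i=1,\dots,m$ followed by $\bm{v}_i(X)$, and likewise for $B'$. Hence $B=B'$ holds if and only if $\bm{v}_i(P)=\bm{v}'_i(P)$ for all $i$ and $P\in\{X,Z\}$. By the injectivity step this is equivalent to $V_i(P)=V'_i(P)$ for $P\in\{X,Z\}$, and by the first step it is equivalent to Eq.~\eqref{eq:equiv-VV}.

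The main obstacle is the first step's use of $Y$: we must check that the sign bit is handled consistently. No sign information for $Y$ is lost, because $V_i(Y)$ is determined as an operator by the product $-iV_i(Z)V_i(X)$, and we never need to read its $r$ from the table. This deserves a careful remark.
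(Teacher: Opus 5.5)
Your proposal is correct and follows essentially the paper's own argument. Both proofs restrict to $P\in\{X,Z\}$ using multiplicativity of conjugation (your $Y=-iZX$ is the same identity as the paper's $Y=iXZ$), treat $P=I$ as trivial, and read the $X$ and $Z$ conditions off the columns of $B$ and $B'$. Your one addition is an explicit argument that the signed encoding $\bm{v}$ is injective on the Hermitian signed Pauli strings that arise, which the paper uses silently when passing from $B'=B$ back to $V_i(P)=V'_i(P)$.
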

\begin{proof}
Assume that Eq.~\eqref{eq:equiv-VV} holds for $P=Z$ and $P=X$, so that $V_i(Z)=V'_i(Z)$ and $V_i(X)=V'_i(X)$ for every $i$. 
Under definition \ref{def:V}, this yields $\bm v_i(Z)=\bm v'_i(Z)$ and $\bm v_i(X)=\bm v'_i(X)$ for all $i$.
Under definition \ref{def:B,Bprime}, every column of $B$ therefore equals the same column of $B'$. 
Hence, Eq.~\eqref{eq:equiv-BB} holds.

Conversely, we assume Eq.~\eqref{eq:equiv-BB}.
Under definition \ref{def:V} and \ref{def:B,Bprime}, the equality $B'=B$ implies that
\[
V_i(Z)=V'_i(Z)\quad\text{and}\quad V_i(X)=V'_i(X)\qquad \forall i.
\]
It then follows that the $Y$-case also coincides:
using $Y = iXZ$,
\begin{align}
V_i(Y)&=C_{m:i}Y^{(k_i)}C_{m:i}^\dagger \notag \\
&=i\,C_{m:i}X^{(k_i)}C_{m:i}^\dagger\ C_{m:i}Z^{(k_i)}C_{m:i}^\dagger \notag \\
&=i\,V_i(X)V_i(Z), \notag
\end{align}
and similarly $V'_i(Y)=i\,V'_i(X)V'_i(Z)$.
Thus $V_i(X)=V'_i(X)$ and $V_i(Z)=V'_i(Z)$ imply $V_i(Y)=V'_i(Y)$.
Finally, $P=I$ is trivial since $V_i(I)=V'_i(I)=I_n$ by definition.
Therefore, Eq.~\eqref{eq:equiv-VV} holds.
Hence, Eq.~\eqref{eq:equiv-VV} holds if and only if Eq.~\eqref{eq:equiv-BB}.
\end{proof}

Combining the above lemmas, we obtain a complete reduction from template equivalence
$F'(\vec{U})=e^{i\phi}F(\vec{U})$ for all $\vec{U}\in\mathcal{U}^m$
to the simultaneous conditions $C'_{m:0}=e^{i\phi}C_{m:0}$ and $B'=B$.
This completes the proof of the main theorem.


\section{Rearrangement Case}
\label{app:rearrange}
In this section, we extend the previous analysis to the case where the single-qubit layers in $F'$ appear in a permuted order.
We fix a permutation $\sigma\in S_m$ describing this rearrangement and introduce the notation that tracks (i) which layer in $F$ corresponds to which layer in $F'$, and (ii) which pairs of layers must be swapped when realizing the permutation $\sigma$ as a product of adjacent transpositions.
These notions will be used to state commuting conditions that guarantee that reordering does not introduce an additional sign in the cumulative Pauli products.


\begin{definition}[Correspondence Set]
\label{def:correspondence-set}
To clarify the layer mapping induced by $\sigma$, we define the correspondence set $\mathcal{M}$ as
\begin{align}
\mathcal{M} := \big\{ (i, j) \in \{1, \dots, m\}^2 \mid \sigma(j) = i \big\}, \notag
\end{align}
which identifies that the $i$-th unitary in $F$ corresponds to the $j$-th unitary in $F'$.
\end{definition}

The set $\mathcal{M}$ allows us to compare the same single-qubit gate $U_i$ (or Pauli substitution $P_i$) across the two circuits even when their order differ.
To control the effect of reordering on products of Pauli operators, we also require a record of the elementary adjacent swaps needed to implement $\sigma$.

\begin{definition}[Set of Adjacent Transpositions]
\label{def:trans}
Let $\sigma \in S_m$ be a permutation.
We define $T(\sigma)$ as the set of adjacent transpositions $(r,r')$ that appear in a decomposition of $\sigma$ of the form
\begin{align}
\sigma = \tau_1 \tau_2 \cdots \tau_k,
\qquad \tau_j = (r_j,r'_j),
\end{align}
where $k$ is the minimum possible number of adjacent transpositions required to realize $\sigma$.
That is,
\begin{align}
T(\sigma) := \{\, (r_j,r'_j) \mid j = 1,\ldots,k \,\}.
\end{align}
\end{definition}

The set $T(\sigma)$ specifies the (minimal) adjacent swaps that occur when $\sigma$ is realized by adjacent transpositions.
Since we evaluate commutation relations between propagated Pauli operators, it is convenient to understand commutation in the binary symplectic representation appearing in the Decision Matrix.

\begin{definition}[Symplectic Inner Product]
For two Pauli operators $P,P'$ with binary symplectic representations defined as Definition \ref{def:V},
\[
\bm{v}(P) = (z,x,r), \qquad \bm{v}'(P') = (z',x',r'),
\]
their \emph{symplectic inner product} is defined as
\begin{align}
\langle \bm{v}(P), \bm{v}'(P') \rangle
:&= z \cdot x' \oplus x \cdot z' \notag \\
&= \sum_{q=1}^n (z_q x'_q \oplus x_q z'_q) \in \mathbb{F}_2.  
\end{align}

\end{definition}

Two Pauli operators $P$ and $P'$ commute if and only if
\[
\langle \bm{v}(P), \bm{v}'(P') \rangle = 0.
\]

We will use this inner product to check, in a purely binary manner, whether two propagated Pauli operators $P_r, P_{r'}$ commute.
In particular, for each adjacent swap encoded by $T(\sigma)$, we will impose commutation of the corresponding propagated operators so that the reordering produces no extra sign.
We now formalize the permuted circuit functions.

\begin{definition}[Problem Setting at Rearrangement Case]
\label{def:F,Fprime-perm}
Let $\vec{U} = (U_1, \dots, U_m)$ be a sequence of single-qubit unitaries.
We consider a permutation $\sigma \in S_m$ on the index set $\{1, \dots, m\}$ and define the permuted sequence $\vec{U}_\sigma$ as
\begin{equation}
\vec{U}_\sigma := \sigma (\vec{U}) = \big( U_{\sigma(1)}, \dots, U_{\sigma(m)} \big), 
\end{equation}
where $U_{\sigma(i)}$ denotes the $i$-th unitary in the permuted sequence.
The quantum circuits $F$ and $F'$ are defined as
\begin{align}
F(\vec{U}) &:= C_m U_m^{(k_m)} C_{m-1} U_{m-1}^{(k_{m-1})} \cdots C_1 U_1^{(k_1)} C_0,
\label{eq:def-circuit_3} \\
F'(\vec{U_\sigma}) &:= C'_m U_{\sigma(m)}^{(l_m)} C'_{m-1} U_{\sigma(m-1)}^{(l_{m-1})} \cdots C'_1 U_{\sigma(1)}^{(l_1)} C'_0.
\label{eq:def-circuit_4}
\end{align}
\end{definition}

As in the non-permuted case, we analyze equivalence-as-function by substituting Pauli operators for the single-qubit unitaries and propagating them through the Clifford layers.
The only difference is that the same Pauli substitution $P_i$ at layer $i$ of $F$ must be compared with the corresponding layer $j$ of $F'$ determined by $\mathcal{M}$.

\begin{definition}[Adjoint Images at Rearrangement Case]
\label{def:V-perm}
For each corresponding pair $(i,j) \in \mathcal{M}$, we define the adjoint stabilizer
operators associated with substituting the same Pauli operator $P_i$ into the
$i$-th layer of $F$ and the $j$-th layer of $F'$, respectively, as
\begin{align}
V_i(P_i) := C_{m:i}\, P_i^{(k_i)}\, C_{m:i}^\dagger,
\\
V'_j(P_i) := C_{m:j}\, P_i^{(l_j)}\, C'^\dagger_{m:j}.
\label{eq:def-V-perm}
\end{align}
And, we also define the cumulative adjoint products
\begin{align}
W(\vec{P}) = V_m(P_m) \dots V_1(P_1),
\label{eq:def-W-perm-1}
\\
W'(\vec{P}_\sigma) = V'_m(P_{\sigma(m)}) \dots V'_1(P_{\sigma(1)}).  
\label{eq:def-W-perm-2}
\end{align}   
\end{definition}

The operators $V_i(P_i)$ and $V'_j(P_i)$ capture how the same single Pauli substitution $P_i$ is conjugated by the Clifford operators in each circuit.
Their ordered products $W(\vec{P})$ and $W'(\vec{P}_\sigma)$ collect the cumulative effects of substitutions, and the permutation $\sigma$ manifests itself only as a reordering of these factors.
To efficiently state equivalence-checking conditions, we next rewrite the decision matrix with the column permutation, induced by $\sigma$.

\begin{definition}[Decision Matrix at Rearrangement Case]
\label{def:B-perm,B-prime}
For the Decision Matrix $B$ defined as Definition \ref{def:B,Bprime}, we define the permutated Decision Matrix $B_\sigma$ of the circuit $F$ defined as Definition \ref{def:F,Fprime-perm},
\begin{align}
B_\sigma &:= 
\begin{bmatrix}
B_{:,\sigma(1)} & B_{:,\sigma(2)} & \cdots & B_{:,\sigma(m)}
\label{eq:def-B-perm}
\end{bmatrix},
\end{align}
where $B_{:,i}$ is $i$-th column of $B$.
\end{definition}

We now prove the completeness of our method in permuted case.
As in the non-permuted case, we first replace arbitrary single-qubit unitaries by Pauli substitutions.
We then isolate the Clifford backbones and collect all substitution-dependent terms into cumulative adjoint products.
Finally, we consider the effect of the permutation $\sigma$ on the ordering of Pauli factors by evaluating commutation of a minimal adjacent-transposition decomposition of $\sigma$.

\begin{lemma}
There exists a constant phase $\phi \in \mathbb{R}$ such that
\begin{align}
F'(\vec{U}_\sigma) = e^{i\phi} F(\vec{U}), 
\quad \forall\, \vec{U}\in\mathcal{U}^m, \, \vec{U}_\sigma := \sigma(\vec{U}),
\label{eq:equiv-U-perm}
\end{align}
if and only if
\begin{equation}
F'(\vec{P}_\sigma) = e^{i\phi}\, F(\vec{P}),
\quad \forall\, \vec{P} \in \mathcal{P}^{m},\, \vec{P}_\sigma := \sigma(\vec{P}),
\label{eq:equiv-P-perm}
\end{equation}
holds with the same phase $\phi$, where 
\begin{align}
\vec{P}_\sigma := \sigma (\vec{P}) = \big( P_{\sigma(1)}, \dots, P_{\sigma(m)} \big),
\label{eq:P-vec}
\end{align}
\end{lemma}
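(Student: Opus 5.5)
The plan is to mirror the first lemma of Appendix~\ref{app:non-rearrange}. The only change is that the unitary $U_i$ now occupies different slots in $F$ and in $F'$. The forward direction is immediate. Since $\mathcal{P}\subset\mathcal{U}$, any $\vec{P}\in\mathcal{P}^m$ is an admissible $\vec{U}$. Its permuted sequence $\sigma(\vec{P})$ is exactly $\vec{P}_\sigma$ as defined in Eq.~\eqref{eq:P-vec}. Hence Eq.~\eqref{eq:equiv-U-perm} specializes to Eq.~\eqref{eq:equiv-P-perm} with the same $\phi$.

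For the converse, I expand each $U_i$ on its target qubit in the Pauli basis, $U_i=\sum_{P\in\mathcal{P}} c_i(P)\,P$. The coefficients $c_i$ are defined as in Eq.~\eqref{eq:pauli-expansion}. They depend only on the $2\times2$ matrix $U_i$, not on which qubit $k_i$ or $l_j$ it acts on. Multilinearity of the circuit in its single-qubit slots then gives
\[
F(\vec{U})=\sum_{\vec{P}\in\mathcal{P}^m}\Big(\prod_{i=1}^m c_i(P_i)\Big)F(\vec{P}).
\]
For $F'$, slot $j$ carries $U_{\sigma(j)}$. Expanding each slot yields
\[
F'(\vec{U}_\sigma)=\sum_{\vec{Q}\in\mathcal{P}^m}\Big(\prod_{j=1}^m c_{\sigma(j)}(Q_j)\Big)F'(\vec{Q}).
\]

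The key step is reindexing the second sum. I set $Q_j=P_{\sigma(j)}$, so that $\vec{Q}=\vec{P}_\sigma$. The map $\vec{P}\mapsto\vec{P}_\sigma$ is a bijection of $\mathcal{P}^m$, because $\sigma$ is a permutation. Under it, $\prod_j c_{\sigma(j)}(P_{\sigma(j)})=\prod_i c_i(P_i)=\epsilon_{\vec{P}}$. Therefore
\[
F'(\vec{U}_\sigma)=\sum_{\vec{P}}\epsilon_{\vec{P}}\,F'(\vec{P}_\sigma).
\]
Substituting Eq.~\eqref{eq:equiv-P-perm} term by term and using the expansion of $F(\vec{U})$ gives $e^{i\phi}F(\vec{U})$, as required.

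I expect the only real obstacle to be bookkeeping rather than mathematics. One must verify that the coefficient attached to slot $j$ of $F'$ is $c_{\sigma(j)}$ and not $c_{\sigma^{-1}(j)}$. One must also check that the reindexing matches the convention $\vec{P}_\sigma=(P_{\sigma(1)},\dots,P_{\sigma(m)})$ used in Eq.~\eqref{eq:equiv-P-perm}. I will state the multilinear expansion explicitly, as a sum over independent choices of one Pauli per slot, so that these identifications are transparent.
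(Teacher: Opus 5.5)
Your proposal is correct and follows the same route as the paper: you get the forward direction by specializing to Pauli tuples, and the converse by Pauli-expanding each slot and using multilinearity. Your explicit reindexing via the bijection $\vec{P}\mapsto\vec{P}_\sigma$, together with the check $\prod_j c_{\sigma(j)}(P_{\sigma(j)})=\epsilon_{\vec{P}}$, makes precise a step the paper leaves implicit when it writes the sum over $\vec{P}_\sigma$ with coefficient $\epsilon_{\vec{P}}$.
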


\begin{proof}
First, assume Eq.~\eqref{eq:equiv-U-perm} holds for all $\vec{U}$. Since the set of Pauli strings $\mathcal{P}^{m}$ is a subset of all possible unitary tuples, the relation must specifically hold for any $\vec{U} = \vec{P} \in \mathcal{P}^{m}$. Thus, \eqref{eq:equiv-P-perm} holds trivially.

Next, assume \eqref{eq:equiv-P-perm} holds.
Each $U_i^{(k)}$ can be uniquely expanded in the $n$-qubit Pauli basis
$\{I_n, X^{(k)}, Y^{(k)}, Z^{(k)}\}$ as
\begin{equation}
U_i^{(k)} = \alpha_i I_n + \beta_i X^{(k)} + \gamma_i Y^{(k)} + \delta_i Z^{(k)}.
\label{eq:pauli-expansion-perm}
\end{equation}
Using this expansion, the circuits can be expressed as
\begin{align}
F(\vec{U}) 
&= \sum_{\vec{P} \in \mathcal{P}^{m}}
\epsilon_{\vec{P}}\, F(\vec{P}), \label{eq:circuit-expansion-perm}\\
F'(\vec{U}_\sigma) 
&= \sum_{\vec{P}_\sigma \in \mathcal{P}^{m}}
\epsilon_{\vec{P}}\, F'(\vec{P}_\sigma).
\label{eq:circuit-expansion-prime-perm}    
\end{align}
Here, $\epsilon_{\vec{P}}$ is the product of the coefficients associated with the Pauli operators constituting $\vec{P} = (P_1, \dots, P_m)$. Specifically, if we define the coefficient function $c_i(P)$ for the $i$-th unitary $U_i^{(k)}$ based on Eq.~\eqref{eq:pauli-expansion-perm} as
\begin{equation}
c_i(P) = 
\begin{cases}
\alpha_i & \text{if } P = I_n, \\
\beta_i  & \text{if } P = X^{(k)}, \\
\gamma_i & \text{if } P = Y^{(k)}, \\
\delta_i & \text{if } P = Z^{(k)},
\end{cases}
\end{equation}
then $\epsilon_{\vec{P}}$ is given by
\begin{equation}
\epsilon_{\vec{P}} = \prod_{i=1}^m c_i(P_i).
\label{eq:def-epsilon-perm}
\end{equation}
Substituting \eqref{eq:equiv-P-perm} into the Pauli expansion in Eq.~\eqref{eq:circuit-expansion-prime-perm}, we obtain
\begin{align*}
F'(\vec{U}_\sigma)
&=
\sum_{\vec{P}_\sigma \in \mathcal{P}^{\otimes m}}
\epsilon_{\vec{P}}\, F'(\vec{P}_\sigma) \\
&=
\sum_{\vec{P} \in \mathcal{P}^{\otimes m}}
\epsilon_{\vec{P}}\, e^{i\phi} F(\vec{P}) \\
&=
e^{i\phi}
\sum_{\vec{P} \in \mathcal{P}^{\otimes m}}
\epsilon_{\vec{P}}\, F(\vec{P})
=
e^{i\phi} F(\vec{U}).
\end{align*}
The last equality follows from \eqref{eq:circuit-expansion-perm}.
\end{proof}

By this lemma, it suffices to analyze the Pauli-substituted instances $F(\vec{P})$ and $F'(\vec{P}_\sigma)$.
The next step is to separate the $\vec{P}$-dependence from the Clifford backbones by rewriting each circuit as a product of a cumulative adjoint operator and the total Clifford part.

\begin{lemma}
\label{lem:WW=CC-perm}
Equation \eqref{eq:equiv-P-perm} holds 
if and only if
\begin{equation}
W^\dagger(\vec{P})\, W'(\vec{P}_\sigma)
= e^{i\phi}\, C_{m:0} C_{m:0}'^\dagger, 
\quad
\forall\, \vec{P}\in \mathcal{P}^{m}, \,  \vec{P}_\sigma := \sigma(\vec{P}).
\label{eq:equiv-WW=CC-perm}
\end{equation}
\end{lemma}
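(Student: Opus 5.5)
The plan is to mirror the proof of Lemma~\ref{lem:WW=CC} for the unpermuted case. The only new ingredient is that the Pauli substitution in $F'$ is the permuted tuple $\vec{P}_\sigma$. The first step is a factorization identity for the Pauli-substituted circuits:
\begin{align}
F(\vec{P}) = W(\vec{P})\, C_{m:0}, \qquad F'(\vec{P}_\sigma) = W'(\vec{P}_\sigma)\, C'_{m:0}. \notag
\end{align}
For $F$ this is exactly the identity used before. For $F'$, the $j$-th slot carries $P_{\sigma(j)}^{(l_j)}$. We insert $C'^\dagger_{m:j}C'_{m:j}$ around each substituted Pauli, starting from the leftmost slot $j=m$ and working rightward. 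Telescoping then gives
\[
C'_m P_{\sigma(m)}^{(l_m)} C'_{m-1}\cdots C'_1 P_{\sigma(1)}^{(l_1)} C'_0 = \prod_{j=m}^{1}\bigl(C'_{m:j} P_{\sigma(j)}^{(l_j)} C'^\dagger_{m:j}\bigr)\, C'_{m:0}.
\]
Each factor of this product is the adjoint image $V'_j(P_{\sigma(j)})$ of Definition~\ref{def:V-perm}, so the product equals $W'(\vec{P}_\sigma)$ by Eq.~\eqref{eq:def-W-perm-2}.

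One point needs care here. Definition~\ref{def:V-perm} writes the left conjugating factor of $V'_j$ as $C_{m:j}$ rather than $C'_{m:j}$. I will read this as the primed product $C'_{m:j}$, consistent with Definition~\ref{def:V}, since that is what the telescoping produces. If one insisted on the literal unprimed factor, $V'_j$ would not even be unitary conjugation and the factorization would fail. So I will state this reading explicitly at the start of the proof.

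With the factorization in hand, the equivalence is the same chain of invertible manipulations as before, applied for each fixed $\vec{P}$. We have $F'(\vec{P}_\sigma)=e^{i\phi}F(\vec{P})$ if and only if $W'(\vec{P}_\sigma)C'_{m:0}=e^{i\phi}W(\vec{P})C_{m:0}$. Left-multiplying by $W^\dagger(\vec{P})$ and right-multiplying by $C'^\dagger_{m:0}$ is reversible, because all four operators are unitary products of Paulis and Cliffords. This step yields Eq.~\eqref{eq:equiv-WW=CC-perm}. Since $\sigma$ is a bijection of $\mathcal{P}^m$ onto itself, quantifying over all $\vec{P}$ on both sides matches up, and the statement holds for all $\vec{P}$ with the same phase $\phi$.

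The only real obstacle is the bookkeeping in the telescoping for $F'$. The index $j$ labels the slot position in $F'$, while $\sigma(j)$ labels which Pauli from $\vec{P}$ sits in that slot, and the two must not be confused. I would write out the case $m=2$ explicitly to confirm the factor order $V'_m(\cdot)\cdots V'_1(\cdot)$ before stating the general induction on $m$. Peeling off $C'_m P^{(l_m)}_{\sigma(m)}$ and applying the inductive hypothesis to the remaining $(m-1)$-layer circuit then completes the argument.
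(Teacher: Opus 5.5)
Your proof is correct and follows the paper's argument. You factor $F(\vec{P})=W(\vec{P})\,C_{m:0}$ and $F'(\vec{P}_\sigma)=W'(\vec{P}_\sigma)\,C'_{m:0}$, then pass between the two equations by reversible left/right multiplication by unitaries; where the paper asserts the factorization ``by definition,'' you justify it by telescoping, and you are right to read the unprimed $C_{m:j}$ in Definition~\ref{def:V-perm} as a typo for $C'_{m:j}$ (your remark that $\sigma$ is a bijection is unnecessary, since the equivalence holds for each fixed $\vec{P}$).
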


\begin{proof}
By definition of the cumulative adjoint operator $W(\vec{P})$, 
\begin{align}
F(\vec{P}) = W(\vec{P})\, C_{m:0}, \notag \\
F'(\vec{P}_\sigma) = W'(\vec{P}_\sigma)\, C'_{m:0}, \notag
\label{eq:equiv-WC-perm}
\end{align}
holds for any $\vec{P}$.
Since $W(\vec{P})$, $W'(\vec{P})$, $C_{m:0}$, and $C'_{m:0}$ are all unitary operators, they are invertible. 
Thus, for any $\vec{P}$, the condition in Eq. \eqref{eq:equiv-P-perm} is equivalent to Eq.~\eqref{eq:equiv-WW=CC-perm} through the following algebraic transformations:
\begin{align*}
&F'(\vec{P}_\sigma) = e^{i\phi} F(\vec{P}) \\
&\iff \quad 
W'(\vec{P}_\sigma)\, C'_{m:0} = e^{i\phi} W(\vec{P})\, C_{m:0} \\
&\iff \quad 
W^\dagger(\vec{P})\, W'(\vec{P}_\sigma)\, C'_{m:0} = e^{i\phi} C_{m:0} 
\\
&\iff \quad 
W^\dagger(\vec{P})\, W'(\vec{P}_\sigma) = e^{i\phi} C_{m:0} C'^\dagger_{m:0} 
.
\end{align*}
This concludes the proof.
\end{proof}

Lemma~\ref{lem:WW=CC-perm} expresses equivalence under Pauli substitutions as an identity of the form
$W^\dagger(\vec{P})W'(\vec{P}_\sigma)=e^{i\phi}C_{m:0}C_{m:0}'^\dagger$.
Next, we evaluate the $P$-independent Clifford backbones by considering the
trivial substitution case as in non-permuted case.

\begin{lemma}
\label{lem:CC-WW-perm}
Equation \eqref{eq:equiv-WW=CC-perm} holds if and only if the following equations hold simultaneously:
\begin{equation}
  C' = e^{i\phi} C_{m:0},
\label{eq:equiv-CC-perm}
\end{equation}
\begin{equation}  
W(\vec{P}) = W'(\vec{P}_\sigma),
\quad
\forall\, \vec{P} \in \mathcal{P}^{m}, \,  \vec{P}_\sigma := \sigma(\vec{P}).
\label{eq:equiv-WW-perm}
\end{equation}
\end{lemma}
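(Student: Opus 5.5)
The plan is to mirror the proof of Lemma~\ref{lem:CC-WW} in the non-permuted case. The argument has two directions, and in both the $C'$ appearing in Eq.~\eqref{eq:equiv-CC-perm} is read as the cumulative Clifford $C'_{m:0}$.

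\textbf{Forward direction.} Assume Eq.~\eqref{eq:equiv-WW=CC-perm} holds for all $\vec{P}$. Specialize to the trivial substitution $\vec{P}=[I_n,\ldots,I_n]$. Because $\sigma$ only permutes entries, $\vec{P}_\sigma=\sigma(\vec{P})$ is again the all-identity sequence. By Definition~\ref{def:V-perm}, conjugating $I_n$ by any unitary returns $I_n$, so $V_i(I_n)=I_n$ and $V'_j(I_n)=I_n$ for every $i,j$. Hence $W(\vec{I})=W'(\vec{I}_\sigma)=I_n$, and the left-hand side of Eq.~\eqref{eq:equiv-WW=CC-perm} collapses to $I_n$. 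The identity then reads $I_n=e^{i\phi}C_{m:0}C'^\dagger_{m:0}$. Multiplying on the right by $C'_{m:0}$ gives $C'_{m:0}=e^{i\phi}C_{m:0}$, which is Eq.~\eqref{eq:equiv-CC-perm}.

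Substituting this back into the right-hand side of Eq.~\eqref{eq:equiv-WW=CC-perm} gives $e^{i\phi}C_{m:0}C'^\dagger_{m:0}=e^{i\phi}e^{-i\phi}C_{m:0}C^\dagger_{m:0}=I_n$. So $W^\dagger(\vec{P})W'(\vec{P}_\sigma)=I_n$ for every $\vec{P}$. Since $W(\vec{P})$ is a product of unitaries, it is unitary, and left-multiplying by $W(\vec{P})$ yields Eq.~\eqref{eq:equiv-WW-perm}.

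\textbf{Converse.} Assume Eqs.~\eqref{eq:equiv-CC-perm} and \eqref{eq:equiv-WW-perm}. Then for every $\vec{P}$ we have $W^\dagger(\vec{P})W'(\vec{P}_\sigma)=W^\dagger(\vec{P})W(\vec{P})=I_n$. Using Eq.~\eqref{eq:equiv-CC-perm}, the right-hand side is $e^{i\phi}C_{m:0}C'^\dagger_{m:0}=e^{i\phi}e^{-i\phi}C_{m:0}C^\dagger_{m:0}=I_n$. The two sides agree, so Eq.~\eqref{eq:equiv-WW=CC-perm} holds.

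\textbf{Main obstacle.} There is no substantive difficulty; the argument is the same as before. The one point needing care is checking that the identity substitution stays fixed under $\sigma$, so that both cumulative products reduce to $I_n$ simultaneously. It is also worth noting that the misprint $C'^\dagger_{m:j}$ paired with $C_{m:j}$ in Definition~\ref{def:V-perm} does not matter here. Even with that misprint, conjugating $I_n$ gives a product of a unitary and its inverse only if the primed Clifford is used on both sides. I will therefore read $V'_j$ as $C'_{m:j}P^{(l_j)}C'^\dagger_{m:j}$, as intended, so that $V'_j(I_n)=I_n$ and the trivial-substitution step goes through.
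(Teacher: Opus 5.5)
Your proof is correct and follows essentially the same route as the paper: specialize to the all-identity substitution (which $\sigma$ leaves fixed) to extract $C'_{m:0}=e^{i\phi}C_{m:0}$, substitute this back so the right-hand side becomes $I_n$ and hence $W(\vec{P})=W'(\vec{P}_\sigma)$, and observe that the converse is immediate. Your reading of $V'_j$ with $C'_{m:j}$ on both sides is the one the paper implicitly uses when it asserts $V'_i(I_n)=I_n$. Note, though, that under the literal misprinted definition this step would fail, so it is not quite accurate to say the misprint ``does not matter''; what makes the argument work is adopting the intended definition.
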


\begin{proof}
If Eq.~\eqref{eq:equiv-WW=CC-perm} holds, by considering the trivial case where all substituted operators are identities, i.e., $\vec{P} = \vec{P}_\sigma = [I_n, \ldots, I_n]$, we find
\[
e^{i\phi}\, C_{m:0} C'^\dagger_{m:0} = I_n,
\]
or, equivaletly, Eq.~\eqref{eq:equiv-CC-perm}.
This is because, from the definition in Eq.~\eqref{eq:def-V-perm}, we have $V_i(I_n) = I_n$ and $V'_i(I_n) = I_n$ for all $i$ and consequently the cumulative products (Eq.~\eqref{eq:def-W-perm-1}, ~\eqref{eq:def-W-perm-2}) become $W(\vec{I}) = I_n$ and $W'(\vec{I}) = I_n$.
By substituting Eq.~\eqref{eq:equiv-CC-perm} into Eq.~\eqref{eq:equiv-WW=CC-perm}, the term on the right-hand side reduces to the identity operator $I_n$. This implies $W^\dagger(\vec{P}) W'(\vec{P}) = I_n$, or equivalently,
\[
W(\vec{P}) = W'(\vec{P}_\sigma),
\quad
\forall\, \vec{P} \in \mathcal{P}^{m}, \, \vec{P}_\sigma := \sigma(\vec{P}).
\]
These arguments show that if Eq.~\eqref{eq:equiv-WW=CC-perm} holds Eqs.~ \eqref{eq:equiv-CC-perm} and \eqref{eq:equiv-WW-perm} hold.
It is also trivial to see Eqs.~\eqref{eq:equiv-CC-perm} and \eqref{eq:equiv-WW-perm} imply Eq. \eqref{eq:equiv-WW=CC-perm}.
We now have that Eq.~\eqref{eq:equiv-WW=CC-perm} holds if and only if Eqs.~ \eqref{eq:equiv-CC-perm} and \eqref{eq:equiv-WW-perm} hold.
\end{proof}

At this point the permuted-template equivalence problem again splits into two parts:
(i) equality of the Clifford backbones up to a global phase, Eq.~\eqref{eq:equiv-CC-perm}, and
(ii) equality of the cumulative adjoint products under permutation, Eq.~\eqref{eq:equiv-WW-perm}.
The remaining difficulty is that $W(\vec{P})$ and $W'(\vec{P}_\sigma)$ are products of Pauli operators whose factors are the same but whose order is permuted.
We next show that equality of these ordered products for all $\vec{P}$ is equivalent to a layerwise matching condition together with commutation constraints in the adjacent swaps in $T(\sigma)$.

\begin{lemma}
\label{lem:VV-perm}
Equation \eqref{eq:equiv-WW-perm} and holds if and only if the following equations hold simultaneously:
\begin{align}  
V_i(P_i) &= V'_j(P_i),
\qquad \forall \, (i,j) \in  \mathcal{M}, \quad \forall \, P_i \in \mathcal{P}, 
\label{eq:equiv-VV-perm} \\
V_r(P_r)V_{r'}(P_{r'}) &= V_{r'}(P_{r'})V_r(P_r), 
\qquad \forall \, (r,r') \in  T(\sigma).
\label{eq:equiv-Vtrans-perm}
\end{align}
\end{lemma}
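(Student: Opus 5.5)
The plan is to prove both directions by choosing special Pauli substitutions, exactly as in the proof of Lemma~\ref{lem:VV}, and then to handle the reordering by an explicit adjacent-swap argument. First I will rewrite $W'(\vec{P}_\sigma)$ in terms of the $F$-indexing. By Definition~\ref{def:V-perm}, the factor at position $j$ of $W'(\vec{P}_\sigma)$ is $V'_j(P_{\sigma(j)})$, and $(\sigma(j),j)\in\mathcal{M}$. Hence $W'(\vec{P}_\sigma)$ is an ordered product of operators labelled by the indices $\sigma(m),\dots,\sigma(1)$. $W(\vec{P})$ is the same kind of product with labels in the natural order $m,\dots,1$.

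\emph{Necessity.} Assume Eq.~\eqref{eq:equiv-WW-perm}.
\begin{itemize}[leftmargin=*]
\item \emph{Layerwise matching.} Fix $(i,j)\in\mathcal{M}$ and take $\vec{P}$ with a single nontrivial entry $P_i$ at position $i$. Every other factor is $V(I)=I_n$, so $W(\vec{P})=V_i(P_i)$ and $W'(\vec{P}_\sigma)=V'_j(P_i)$. This gives Eq.~\eqref{eq:equiv-VV-perm} directly.
\item \emph{Commutation.} For a pair $(r,r')$, take $\vec{P}$ with exactly two nontrivial entries, at $r$ and $r'$. Then $W(\vec{P})$ is the two-factor product in one order. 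Using the matching just proved, $W'(\vec{P}_\sigma)$ is the same two factors in the order dictated by $\sigma$. If $\sigma$ reverses the relative order of $r$ and $r'$, equality of the two products is exactly Eq.~\eqref{eq:equiv-Vtrans-perm}.
\end{itemize}

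\emph{Sufficiency.} Assume Eqs.~\eqref{eq:equiv-VV-perm} and \eqref{eq:equiv-Vtrans-perm}.
\begin{itemize}[leftmargin=*]
\item Substituting Eq.~\eqref{eq:equiv-VV-perm} turns $W'(\vec{P}_\sigma)$ into $V_{\sigma(m)}(P_{\sigma(m)})\cdots V_{\sigma(1)}(P_{\sigma(1)})$.
\item Write $\sigma=\tau_1\cdots\tau_k$ as a minimal product of adjacent transpositions, as in Definition~\ref{def:trans}. Apply the swaps one at a time, transforming this product into $W(\vec{P})$.
\item Each swap exchanges two neighbouring factors $V_r(P_r)$ and $V_{r'}(P_{r'})$ with $(r,r')\in T(\sigma)$. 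By Eq.~\eqref{eq:equiv-Vtrans-perm} the exchange leaves the operator unchanged. Induction on $k$ then gives $W(\vec{P})=W'(\vec{P}_\sigma)$ for every $\vec{P}$.
\item Because the $V$'s are signed Pauli strings, any two of them either commute or anticommute. It therefore suffices to test commutation for $P_r,P_{r'}\in\{X,Z\}$, via the symplectic inner product of the corresponding columns of $B_\sigma$. The case $P=I$ is trivial, and $Y$ follows from $V(Y)=iV(X)V(Z)$.
\end{itemize}

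\emph{Main obstacle.} The delicate point is matching the set $T(\sigma)$ to the pairs that actually need to commute. In a bubble-sort realisation of $\sigma$, the factors swapped at each step are not fixed position labels. They are the pairs of \emph{original} indices $(r,r')$ forming the inversions of $\sigma$, and every inversion is swapped exactly once in a minimal decomposition.

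I would therefore first make precise that $T(\sigma)$ is read as this inversion set, with labels referring to the $F$-indices $r,r'$ rather than to slots. With that reading both directions close: sufficiency uses all swaps, and necessity uses exactly the two-point substitutions of inverted pairs. Non-inverted pairs never need to commute, since their relative order is preserved.

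I would also check that necessity does not require commutation for $P$'s at three or more positions. It does not: once all inverted pairs commute pairwise, the general product follows from the sufficiency argument.
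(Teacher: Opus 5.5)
Your argument is correct, and it has the same skeleton as the paper's proof: single-site substitutions give the layerwise matching, two-site substitutions give commutation, and an adjacent-swap induction gives sufficiency. The one place you depart from the paper is your reading of $T(\sigma)$ as the set of inverted pairs of $F$-labels, and that change is necessary, not cosmetic. The paper's sufficiency step treats the slots $(r_j,r'_j)$ of an adjacent transposition as if they were the labels of the factors being exchanged. Its necessity step takes $\vec{P}_\sigma$ with $P_b$ in slot $a$ and $P_a$ in slot $b$, but that vector equals $\sigma(\vec{P})$ only when $\sigma(a)=b$ and $\sigma(b)=a$.

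With $T(\sigma)$ read literally, as the adjacent transpositions appearing in a reduced word, the lemma fails in both directions. In both examples below take $n=2$, $m=3$, and every Clifford layer equal to $I$.
\begin{enumerate}[leftmargin=*]
\item \emph{Sufficiency fails.} Let $\sigma=(1\,3)$ and $k=l=(1,2,1)$. Then Eq.~\eqref{eq:equiv-VV-perm} holds, and $V_2$ acts on qubit~2, so it commutes with $V_1$ and $V_3$. Both literal pairs $(1,2)$ and $(2,3)$ therefore commute. Yet for $P_1=X$, $P_2=I$, $P_3=Z$ one gets $W(\vec{P})=Z^{(1)}X^{(1)}=-W'(\vec{P}_\sigma)$.
\item \emph{Necessity fails.} Let $\sigma(1)=2$, $\sigma(2)=3$, $\sigma(3)=1$, with $k=(2,1,1)$ and $l=(1,1,2)$. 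Then $F'(\vec{U}_\sigma)=U_1^{(2)}U_3^{(1)}U_2^{(1)}=F(\vec{U})$ for every $\vec{U}$. Nevertheless $V_2(X)=X^{(1)}$ and $V_3(Z)=Z^{(1)}$ anticommute, although $(2,3)\in T(\sigma)$.
\end{enumerate}
Your set $\{\{a,b\} : a<b,\ \sigma^{-1}(a)>\sigma^{-1}(b)\}$ gives $\{\{1,2\},\{1,3\},\{2,3\}\}$ and $\{\{1,2\},\{1,3\}\}$ respectively, which is exactly what each example requires.

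So state at the outset that you are proving the lemma with $T(\sigma)$ replaced by this inversion set. The same replacement is needed in Lemma~\ref{lem:sinplec} and in the final rearrangement criterion. The test remains polynomial: there are at most $\binom{m}{2}$ pairs, each checked on the $X$ and $Z$ columns as you indicate.
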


\begin{proof}
If Eq.~\eqref{eq:equiv-VV-perm} holds for every corresponding pair $(i,j)\in\mathcal{M}$ and every $P\in\mathcal{P}$, the Pauli (stabilizer) factors defined in Definition~\ref{def:V-perm} coincide.
Consequently, $W(\vec{P})$ and $W'(\vec{P}_\sigma)$ are products of the same set of Pauli (stabilizer) factors, differing only by their order.
However, Pauli operators either commute or anticommute, so equality of the individual factors alone does not in general imply equality of the ordered products.
Now assume in addition Eq.~\eqref{eq:equiv-Vtrans-perm}.
Each transposition $\tau_j$ in the $\sigma$, defined in Definition \ref{def:trans} swaps two neighboring factors, and Eq.~\eqref{eq:equiv-Vtrans-perm} guarantees that every such swapped pair commutes.
Therefore, the product is invariant under all swaps required to reorder $W(\vec{P})$ by $\sigma$, and hence Eq.~\eqref{eq:equiv-WW-perm} holds.
Thus, Eqs.~\eqref{eq:equiv-VV-perm} and \eqref{eq:equiv-Vtrans-perm} together imply Eq.~\eqref{eq:equiv-WW-perm}.

In contrast, we assume Eq.~\eqref{eq:equiv-WW-perm}.
Consider the choice
\begin{align}
\vec{P} = [I_n, \ldots, I_n, \underbrace{P_i}_{\text{$i$-th}}, I_n, \ldots, I_n], \notag
\\
\vec{P}_\sigma = [I_n, \ldots, I_n, \underbrace{P_i}_{\text{$j$-th}}, I_n, \ldots, I_n]. \notag
\end{align}

By Definition~\ref{def:V-perm}, all factors in $W(\vec{P})$ (resp., $W'(\vec{P}_\sigma)$) are the identity except the single nontrivial factor $V_i(P_i)$ (resp., $V'_j(P_i)$), hence
$W(\vec{P})=V_i(P_i)$ and $W'(\vec{P}_\sigma)=V'_j(P_i)$.
Therefore Eq.~\eqref{eq:equiv-WW-perm} implies $V_i(P_i)=V'_j(P_i)$, and since $(i,j)$ and $P_i$ were arbitrary, Eq.~\eqref{eq:equiv-VV-perm} follows.
Next, we assume for contradiction that there exist $P_r,P_{r'}\in\mathcal{P}$ such that $V_a(P_a)$ and $V_{b}(P_{b})$ anticommute, i.e., $V_a(P_a)V_{b}(P_{b})=-\,V_{b}(P_{b})V_a(P_a)$.
Consider the choice
\begin{align}
\vec{P} = [I_n, \ldots, I_n, \underbrace{P_a}_{\text{$a$-th}}, I_n, \ldots, I_n, \underbrace{P_b}_{\text{$b$-th}}, I_n, \ldots, I_n], \notag
\\
\vec{P}_\sigma = [I_n, \ldots, I_n, \underbrace{P_b}_{\text{$a$-th}}, I_n, \ldots, I_n, \underbrace{P_a}_{\text{$b$-th}}, I_n, \ldots, I_n]. \notag
\end{align}

Then $W(\vec{P})$ reduces to $V_a(P_a)V_b(P_b)$.
On the other hand, since $(a,b)\in T(\sigma)$ appears as a transposition in the decomposition realizing $\sigma$, the corresponding $\sigma$-ordered product swaps the relative order of these two nontrivial factors, and thus $W'(\vec{P}_\sigma)$ reduces to $V_b(P_b)V_a(P_a)$.
Using Eq.~\eqref{eq:equiv-VV-perm} to identify the corresponding factors in $W'(\vec{P}_\sigma)$ with those in $W(\vec{P})$, Eq.~\eqref{eq:equiv-WW-perm} would force 
\[
W(\vec{P}) = V_a(P_a)V_b(P_b)=V_b(P_b)V_a(P_a) = W'(\vec{P}_\sigma),
\]
contradicting anticommutation.
Therefore $V_a(P_a)$ and $V_b(P_b)$ commute for all $P_a,P_b\in\mathcal{P}$ and all $(a,b)\in T(\sigma)$, which is Eq.~\eqref{eq:equiv-Vtrans-perm}.
\end{proof}

Lemma~\ref{lem:VV-perm} reduces the permuted cumulative condition $W(\vec{P})=W'(\vec{P}_\sigma)$ to two purely local requirements:
a correspondence condition $V_i(P)=V'_j(P)$ for each $(i,j)\in\mathcal{M}$, and commutation conditions for the swapped pairs in $T(\sigma)$.
To obtain an efficiently checkable criterion, we now express the correspondence condition in the binary symplectic representation and show that it is equivalent to equivalence of the decision matrices up to the column permutation induced by $\sigma$.

\begin{lemma}
Equation \eqref{eq:equiv-VV-perm} holds if and only if
\begin{align}
B' = B_\sigma.
\label{eq:equiv-BB-perm}
\end{align}
$B'$ is defined in Definition \ref{def:B,Bprime}, $B_\sigma$ is defined in Definition \ref{def:B-perm,B-prime}.
\end{lemma}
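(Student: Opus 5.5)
We mirror the proof of the corresponding lemma in the non-permuted case and compare the two matrices column by column. The correspondence set $\mathcal{M}$ is the graph of $\sigma$: $(i,j)\in\mathcal{M}$ iff $i=\sigma(j)$. Hence Eq.~\eqref{eq:equiv-VV-perm} can be rewritten as
\[
V_{\sigma(j)}(P)=V'_j(P)\qquad \forall\, j\in\{1,\dots,m\},\ \forall\, P\in\mathcal{P}.
\]
The goal is to show this is equivalent to the equality of the $j$-th columns of $B'$ and $B_\sigma$ (in both the $Z$-block and the $X$-block).

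First, we fix the column convention. By Definition~\ref{def:B,Bprime}, column $j$ of $B'$ is $\bm v'_j(Z)$ and column $m+j$ is $\bm v'_j(X)$. We read Definition~\ref{def:B-perm,B-prime} as permuting both blocks of $B$ by the same $\sigma$, so that column $j$ of $B_\sigma$ is $\bm v_{\sigma(j)}(Z)$ and column $m+j$ is $\bm v_{\sigma(j)}(X)$. This reading is consistent with the definition, which lists only $m$ columns but is clearly intended blockwise. With it, $B'=B_\sigma$ is equivalent to
\[
\bm v'_j(Z)=\bm v_{\sigma(j)}(Z)\quad\text{and}\quad \bm v'_j(X)=\bm v_{\sigma(j)}(X)\qquad \text{for all } j.
\]

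Second, we use that the map $\pm P_1\otimes\cdots\otimes P_n\mapsto(z,x,r)$ of Definition~\ref{def:V} is injective on signed Pauli strings. Every $V$ here is Hermitian, so its phase is $\pm1$ and is fully recorded by $r$. Therefore equality of the vectors is equivalent to equality of the operators:
\[
\bm v'_j(P)=\bm v_{\sigma(j)}(P)\iff V'_j(P)=V_{\sigma(j)}(P),\qquad P\in\{X,Z\}.
\]
Given this, the forward direction is immediate: specialize Eq.~\eqref{eq:equiv-VV-perm} to $P=X$ and $P=Z$, and both blocks of columns agree.

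For the converse, $B'=B_\sigma$ gives $V'_j(X)=V_{\sigma(j)}(X)$ and $V'_j(Z)=V_{\sigma(j)}(Z)$. The $Y$ case then follows from $Y=iXZ$ exactly as in the unpermuted lemma:
\[
V_{\sigma(j)}(Y)=i\,V_{\sigma(j)}(X)\,V_{\sigma(j)}(Z)=i\,V'_j(X)\,V'_j(Z)=V'_j(Y).
\]
The $I$ case is trivial because both sides equal $I_n$.

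The only delicate point is a notational inconsistency in Definition~\ref{def:V-perm}, which writes $V'_j(P_i)=C_{m:j}P_i^{(l_j)}C'^\dagger_{m:j}$. We interpret this as $C'_{m:j}P^{(l_j)}C'^\dagger_{m:j}$, consistent with Definition~\ref{def:V}, so that $V'_j$ is a genuine conjugation. This matters because the $Y=iXZ$ factorization uses the identity $C'^\dagger_{m:j}C'_{m:j}=I$, which holds only if the same unitary appears on both sides of the conjugation. Apart from fixing this interpretation and the blockwise column convention, the argument is a direct bookkeeping translation of the earlier lemma.
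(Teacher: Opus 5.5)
Your proof is correct and follows the paper's argument: the forward direction matches columns of $B'$ and $B_\sigma$ through $\sigma(j)=i$, and the converse recovers the $Y$ case from $Y=iXZ$, with $I$ trivial. Your explicit fixes are sound and make precise what the paper leaves implicit: reading $B_\sigma$ as permuting both the $Z$- and $X$-blocks, replacing $C_{m:j}$ by $C'_{m:j}$ in Definition~\ref{def:V-perm}, and justifying injectivity of $(z,x,r)$ via Hermiticity.
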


\begin{proof}
Assume that Eq.~\eqref{eq:equiv-VV-perm} holds for $P=Z$ and $P=X$, so that $V_i(Z)=V'_j(Z)$ and $V_i(X)=V'_j(X)$ for every layer. 
Under Definition \ref{def:V}, this yields $\bm v_i(Z)=\bm v'_j(Z)$ and $\bm v_i(X)=\bm v'_j(X)$.
Under Definition \ref{def:B,Bprime}, $i$-th column of $B$ therefore equals the $j$-th column of $B'$.
Hence, under Definition \ref{def:B-perm,B-prime} and $\sigma(j) = i$ from Definition \ref{def:correspondence-set}, $i$-th column of $B$ equals the $j$-th column of $B_\sigma$.
Thus, Eq.~\eqref{eq:equiv-BB-perm} holds.

Conversely, we assume Eq.~\eqref{eq:equiv-BB-perm}.
Under Definition \ref{def:V}, \ref{def:B,Bprime}, and \ref{def:B-perm,B-prime}, the equality $B'=B_\sigma$ implies that
\[
V_i(Z)=V'_j(Z)\quad\text{and}\quad V_i(X)=V'_j(X)\qquad \forall \, (i,j) \in \mathcal{M}.
\]
It then follows that the $Y$-case also coincides:
using $Y = iXZ$,
\begin{align}
V_i(Y)&=C_{m:i}Y^{(k_i)}C_{m:i}^\dagger \notag \\
&=i\,C_{m:i}X^{(k_i)}C_{m:i}^\dagger\ C_{m:i}Z^{(k_i)}C_{m:i}^\dagger \notag \\
&=i\,V_i(X)V_i(Z), \notag
\end{align}
and similarly $V'_j(Y)=i\,V'_j(X)V'_j(Z)$.
Thus $V_i(X)=V'_j(X)$ and $V_i(Z)=V'_j(Z)$ imply $V_i(Y)=V'_j(Y)$.
Finally, $P=I$ is trivial since $V_i(I)=V'_j(I)=I_n$ by definition.
Therefore, Eq.~\eqref{eq:equiv-VV-perm} holds.
Hence, Eq.~\eqref{eq:equiv-VV-perm} holds if and only if Eq.~\eqref{eq:equiv-BB-perm}.
\end{proof}

We now turn to the correspondence condition Eq.~\eqref{eq:equiv-VV-perm} and express it in terms of the Decision Matrices.
We note that the commutation constraints in Eq.~\eqref{eq:equiv-Vtrans-perm} can also be checked purely in the binary symplectic representation in Decision Mtrix via the symplectic inner product introduced above.

\begin{lemma}
\label{lem:sinplec}
Equation \eqref{eq:equiv-Vtrans-perm} holds if and only if
\begin{align}  
\langle \bm{v}_r(P_r), \bm{v}_{r'}(P_{r'}) \rangle = 0,
\qquad \forall \,(r,r')\in T(\sigma).
\label{eq:simplec}
\end{align}
\end{lemma}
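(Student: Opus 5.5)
The plan is to reduce Lemma~\ref{lem:sinplec} to the standard fact, stated just before the statement, that two Pauli operators commute if and only if their symplectic inner product vanishes. The work is in checking that this fact applies verbatim to the signed operators $V_r(P_r)$ and $V_{r'}(P_{r'})$, and that quantifying over all Pauli substitutions $P_r, P_{r'}$ is faithfully captured by the columns of the decision matrix.

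\textbf{Step 1: signs play no role.} By Definition~\ref{def:V}, each $V_r(P_r)$ has the form $(-1)^{r_r} Q_r$, where $Q_r=Q_{r,1}\otimes\cdots\otimes Q_{r,n}$ is an unsigned Pauli string, and likewise for $V_{r'}(P_{r'})$. The scalar signs factor out of both products, so
\[
V_r(P_r)V_{r'}(P_{r'})=V_{r'}(P_{r'})V_r(P_r)
\iff Q_rQ_{r'}=Q_{r'}Q_r .
\]
This also explains why the sign bit $r$ in $\bm v$ is ignored by the inner product.

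\textbf{Step 2: qubit-by-qubit commutation.} Compute $Q_rQ_{r'}=\prod_q (Q_{r,q}Q_{r',q})$ factorwise. On each qubit, two single-qubit Paulis either commute or anticommute. Checking the four-element table against the $(z,x)$ encoding of Definition~\ref{def:V}, they anticommute exactly when $z_qx'_q\oplus x_qz'_q=1$. Hence
\[
Q_rQ_{r'}=(-1)^{\langle \bm v_r,\bm v_{r'}\rangle}Q_{r'}Q_r ,
\]
so commutation holds if and only if the inner product vanishes.

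\textbf{Step 3: quantifier bookkeeping.} Equation~\eqref{eq:equiv-Vtrans-perm} is meant for all $P_r,P_{r'}\in\mathcal P$, as its use in the proof of Lemma~\ref{lem:VV-perm} shows. The case $P=I$ is trivial, since $\bm v=0$ up to sign. The case $Y$ follows from $X$ and $Z$: $V(Y)=iV(X)V(Z)$ gives $\bm v(Y)=\bm v(X)\oplus\bm v(Z)$ on the $(z,x)$ part, and the inner product is bilinear. So it suffices to check the four pairs drawn from $\{X,Z\}^2$, which are exactly inner products of columns $r,r',m+r,m+r'$ of $B$.

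The main obstacle is not mathematical depth but precision. The delicate points are the sign and phase bookkeeping in Step~1, including the factor $i$ relating $Y$ to $XZ$, and verifying the single-qubit anticommutation table against the paper's particular $(z,x)$ convention. I would carry out the latter explicitly, checking $X$ versus $Z$, $X$ versus $Y$, and $Y$ versus $Z$.
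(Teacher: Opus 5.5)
Your proposal is correct and follows the same route as the paper, which simply applies to $V_r(P_r)$ and $V_{r'}(P_{r'})$ the standard fact that two Pauli operators commute if and only if their symplectic inner product vanishes. Your Steps 1--2 just prove that fact, including why the sign bit is irrelevant, and your Step 3 reduction by bilinearity to the $X$ and $Z$ columns of $B$ is a useful addition that the paper leaves implicit.
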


\begin{proof}
For any pair of $n$-qubit Pauli operators $P,P'$, those commute if and only if 
\[
\langle \bm{v}(P),\bm{v}'(P')\rangle=0.
\]
Applying this with $V_r(P_r)$ and $V_{r'}(P_{r'})$ yields the claim for each $(r,r')\in T(\sigma)$.
\end{proof}

Therefore, the verification of Eq.~\eqref{eq:equiv-Vtrans-perm} reduces to checking vanishing symplectic inner products of the corresponding propagated Pauli operators.

Combining the above lemmas, the permuted equivalence-as-function
$F'(\vec{U}_\sigma)=e^{i\phi}F(\vec{U})$ for all $\vec{U}\in\mathcal{U}^m$
reduces to (i) the backbone condition $C'_{m:0}=e^{i\phi}C_{m:0}$,
(ii) the decision-matrix relation $B'=B_\sigma$, and 
(iii) the commutation constraints in Eq.~\eqref{eq:simplec}.
This finishes the proof of the completeness of our method for the rearrangement cases which is in more general.


\section{Heuristic Verification for Fixed Instances}
\label{app:heuristic}

So far we have established a complete (necessary-and-sufficient) criterion for \emph{function} equivalence, i.e., equality for all assignments $\vec{U}$.
For practical compiler testing, it is also useful to state a sufficient (non-necessary) criterion for \emph{fixed} instances: 
if all the Pauli-substituted identities hold uniformly with a common phase, then the instantiated circuits agree for any particular choice of $\vec{U}$.

\begin{lemma}
\label{lem:rough-sufficient}
Suppose there exists a constant phase $\phi\in\mathbb{R}$ such that
\begin{equation}
F'(\vec{P}) = e^{i\phi}\,F(\vec{P}),
\qquad \forall\,\vec{P}\in\mathcal{P}^m,
\label{eq:rough-P}
\end{equation}
holds with the same $\phi$, where $\mathcal{P}=\{I,X,Y,Z\}$.
Then, for any fixed assignment $\vec{U}\in\mathcal{U}^m$, we have
\begin{equation}
F'(\vec{U}) = e^{i\phi}\,F(\vec{U}).
\label{eq:rough-U}
\end{equation}
In particular, Eq.~\eqref{eq:rough-U} holds for some $\vec{U}\in\mathcal{U}^m$.
\end{lemma}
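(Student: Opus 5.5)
This is essentially the sufficiency direction of the first lemma of Appendix~\ref{app:non-rearrange}, specialized to one fixed $\vec{U}$. We can either invoke that lemma directly or redo its expansion argument for the single instance $\vec{U}$; we choose the latter because it makes clear that nothing beyond linearity is used. The key point is that $F(\vec{U})$ is multilinear in the individual factors $U_1,\dots,U_m$.

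\textbf{Step 1: expand into Pauli terms.} Fix an arbitrary $\vec{U}\in\mathcal{U}^m$. As in Eq.~\eqref{eq:pauli-expansion}, expand each $U_i$ in the single-qubit Pauli basis, giving coefficients $c_i(P)$. Because each $U_i^{(k_i)}$ enters the product defining $F$ exactly once and linearly, distributing the product gives
\begin{equation*}
F(\vec{U})=\sum_{\vec{P}\in\mathcal{P}^m}\epsilon_{\vec{P}}\,F(\vec{P}),\qquad \epsilon_{\vec{P}}=\prod_{i=1}^m c_i(P_i).
\end{equation*}

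\textbf{Step 2: expand $F'$ with the same coefficients.} The expansion of $F'(\vec{U})$ uses the identical coefficients $\epsilon_{\vec{P}}$. The reason is that $c_i(P)$ depends only on the $2\times 2$ matrix $U_i$, not on the qubit index at which it acts. So although $U_i$ is placed on qubit $l_i$ in $F'$ and on qubit $k_i$ in $F$, the coefficients coincide. This is the one detail to verify carefully.

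\textbf{Step 3: substitute the hypothesis.} Insert Eq.~\eqref{eq:rough-P} termwise into the expansion of $F'(\vec{U})$. Since the phase $\phi$ is common to all $\vec{P}$, the factor $e^{i\phi}$ pulls out of the sum, yielding $e^{i\phi}F(\vec{U})$. This gives Eq.~\eqref{eq:rough-U} for every $\vec{U}$, and hence in particular for some $\vec{U}$.

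There is no genuine obstacle. The only subtle point is the uniformity of $\phi$. If each $\vec{P}$ came with its own phase $\phi_{\vec{P}}$, the factoring in Step 3 would fail. This is why the hypothesis fixes one common phase, and the sufficient criterion is only meaningful with that restriction.
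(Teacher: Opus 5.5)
Your proposal is correct and follows the paper's proof essentially verbatim. You fix $\vec{U}$, expand each $U_i$ in the Pauli basis to obtain $F(\vec{U})=\sum_{\vec{P}}\epsilon_{\vec{P}}F(\vec{P})$ and the same expansion with identical coefficients for $F'$, then substitute the hypothesis and pull out the common phase $e^{i\phi}$; your remark that $c_i(P)$ depends only on the $2\times 2$ matrix $U_i$ and not on the qubit index ($k_i$ versus $l_i$) justifies a point the paper merely asserts.
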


\begin{proof}
Fix an $\vec{U}\in\mathcal{U}^m$ to the paticular one set of sequence.
Each single-qubit unitary $U_i^{(k_i)}$ admits a (unique) expansion in the Pauli basis $\{I_n,X^{(k_i)},Y^{(k_i)},Z^{(k_i)}\}$, so by multilinearity of circuit composition we can write
\[
F(\vec{U})=\sum_{\vec{P}\in\mathcal{P}^m}\epsilon_{\vec{P}}\,F(\vec{P}),
\qquad
F'(\vec{U})=\sum_{\vec{P}\in\mathcal{P}^m}\epsilon_{\vec{P}}\,F'(\vec{P}),
\]
with the same coefficients $\epsilon_{\vec{P}}$ for both circuits.
Substituting Eq.~\eqref{eq:rough-P} into the second expansion yields
\[
F'(\vec{U})
=\sum_{\vec{P}}\epsilon_{\vec{P}}\,e^{i\phi}F(\vec{P})
=e^{i\phi}\sum_{\vec{P}}\epsilon_{\vec{P}}\,F(\vec{P})
=e^{i\phi}F(\vec{U}),
\]
which proves Eq.~\eqref{eq:rough-U}.
\end{proof}

Lemma~\ref{lem:rough-sufficient} can be viewed as a soundness statement: a uniform Pauli-substitution identity with a common phase implies equivalence of the instantiated circuits.
Unlike the equivalence-as-function lemmas above, it is one-sided and is stated separately.

\bibliography{reference.bib}

\end{document}